\newtheorem{thm}{Theorem}[section]
\newtheorem{lemma}[thm]{Lemma}
\newtheorem{definition}[thm]{Definition}
\newtheorem{proposition}[thm]{Proposition}
\newtheorem{conjecture}[thm]{Conjecture}
\DeclareMathOperator{\sgn}{sgn}
\journal{Discrete Applied Mathematics}
\begin{document}

\begin{frontmatter}



\title{On a recursive construction of circular paths and the search for $\pi$ on the integer lattice $\mathbb{Z}^2$}


\author{Michelle Rudolph-Lilith\corref{cor1}}
\ead{rudolph@unic.cnrs-gif.fr}

\address{Unit\'e de Neurosciences, Information et Complexit\'e (UNIC) \\
CNRS, 1 Ave de la Terrasse, 91198 Gif-sur-Yvette, France}


\begin{abstract}
Digital circles not only play an important role in various technological settings, but also provide a lively playground for more fundamental number-theoretical questions. In this paper, we present a new recursive algorithm for the construction of digital circles on the integer lattice $\mathbb{Z}^2$, which makes sole use of the signum function. By briefly elaborating on the nature of discretization of circular paths, we then find that this algorithm recovers, in a space endowed with $\ell^1$-norm, the defining constant $\pi$ of a circle in $\mathbb{R}^2$.
\end{abstract}


\begin{keyword}

digital circle \sep discrete geometry \sep discretization \sep integer lattice \sep Manhattan distance \sep recursive algorithms \sep pi

\MSC 97N70 \sep 68R10 \sep 52C05 \sep 11H06
\end{keyword}

\end{frontmatter}





\section{Introduction}
\label{S_Intro}

\begin{figure}[t!]
\centering
\includegraphics[width=\linewidth]{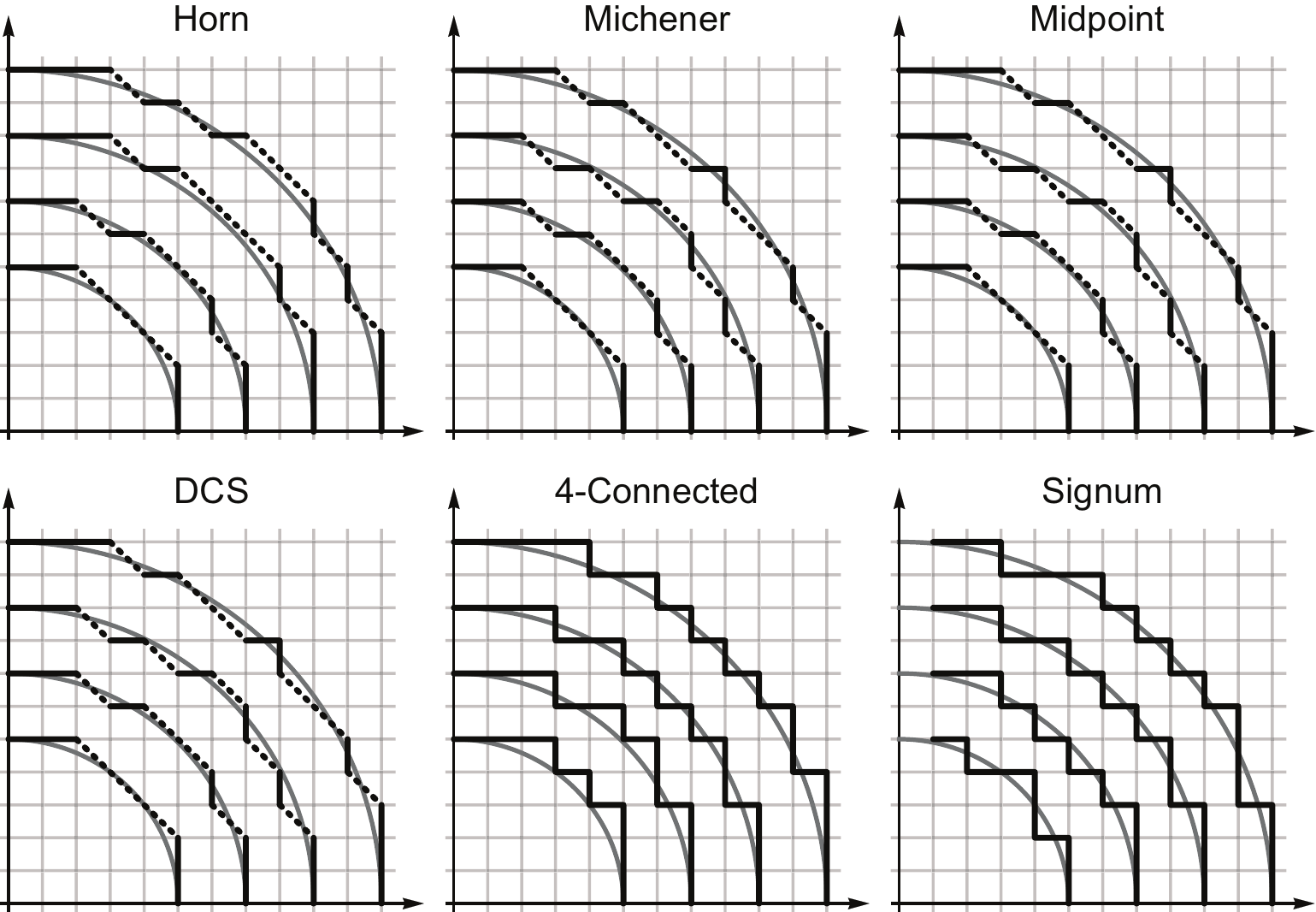}
\caption{\label{Fig_1}
Construction of digital circles using different algorithms (Horn \cite{Horn76}; Michener \cite{FoleyVanDam82}; Second-Order Midpoint \cite{FoleyEA90}; DCS \cite{BhowmickBhattacharya08}; 4-Connected \cite{BarreraEA15}; Signum: see text; for a thorough comparative study and some historical notes, see \cite{BarreraEA15}). Shown are examples of digital circles (black) approximating circles of radii 5, 7, 9 and 11 (gray). With the exception of the 4-connected and signum algorithm, most of the digital circle algorithms cited in the literature do not yield valid paths on $\mathbb{Z}^2$ (black dotted; see Definition~\ref{Def_ValidPath}).
}  
\end{figure}

The analytical characterization and algebraic representation of circles have a long history, dating back many thousands of years. With the emergence of digital computing devices utilizing grid-based interfaces in the past century, the fascination with  circles and their algorithmic generation saw another drive which significantly contributed to the evolution of discrete mathematical domains such as digital calculus and digital geometry \cite{KletteRosenfeld04, Chen14}. The interest in digital circles transcends, however, beyond application-focused paradigms. For instance, in number theory, the still unsolved Gauss's Circle Problem (e.g., see \cite{Huxley96}) or the distribution of square numbers in discrete intervals \cite{BhowmickBhattacharya08} are inherently linked to the representation of the Euclidean circle on integer lattices. In physics, a related, though perhaps controversial point is the fevered search for a quantum theory of space (and time), i.e. a discrete makeup of our world, which does ultimately lead to the rejection of the ideal real number line in favour of a discrete and finite (or effinite, see \cite{Gauthier02}) mathematical underpinning of the very construct of reality. However, despite many advances in the past decades, a rigorous and applicable framework of a discrete finite, perhaps even ultra-finite, or effinite mathematics is still largely missing, not at least due to the combinatorial complexity inherent to such approaches. 

A great number of algorithms for the generation of digital circles is known in the literature (for reviews, see \cite{Andres94, BarreraEA15}). In complexity, these algorithms range from the incremental discretization of the implicit or parametric representation of the Euclidean circle \cite{Bresenham77, Doros79, Kulpa79, McIlroy83, Kim84, NakamuraAizawa84, BiswasChaudhuri85, Pham92}, the discretization of differential equations \cite{WuRokne87, Holin91}, sophisticted spline and polygonal approximations \cite{PieglTiller89, Goldapp91, HosurMa99, BhowmickBhattacharya05}, to algorithms which utilize number-theoretical concepts \cite{BhowmickBhattacharya08}. Although all incremental algorithms utilize decision (or cost) functions, the concrete form of the latter, as well as their specific implementation, can lead to quite different representations of digital circles with the same radius (Fig.~\ref{Fig_1}). Moreover, with the exception of the 4-connected algorithm \cite{BarreraEA15} and the signum algorithm presented here, most of the used digital circle algorithms do not yield valid circular paths on the underlying 2-dimensional integer lattice. Here, a valid path is defined by

\begin{definition}
\label{Def_ValidPath}
Denoting with $\boldsymbol{x} = (x,y) \in \mathbb{Z}^2$ a point on the 2-dimensional integer lattice, a valid path $\mathcal{P}$ is defined as a set of points $\{ \boldsymbol{x}_n \}$ such that $\forall \boldsymbol{x}_n \in \mathcal{P}$, there exist at most two $\boldsymbol{x}_m , \boldsymbol{x}_{m'} \in \mathcal{P}$ with $m \neq m' \neq n$ such that $\lVert \boldsymbol{x}_n - \boldsymbol{x}_m \rVert_1 = 1$ and $\lVert \boldsymbol{x}_n - \boldsymbol{x}_{m'} \rVert_1 = 1$, where $\lVert \boldsymbol{x} \rVert_1 = |x|+|y|$ denotes the $\ell^1$-norm on $\mathbb{Z}^2$. For a valid closed path, there exist, for each $\boldsymbol{x}_n$, exactly two such $\boldsymbol{x}_m , \boldsymbol{x}_{m'} \in \mathcal{P}$ with the aforementioned properties.
\end{definition}

In this paper, we will present a simple recursive algorithm, the signum algorithm, which generates a valid circular path on a 2-dimensional integer lattice $\mathbb{Z}^2$ (Section \ref{S_SignumAlgorithm}). Although this algorithm can not be viewed as the computationally most efficient digital circle algorithm, it allows for easy generalization to higher dimensions, thus providing a viable algorithm for constructing spheres and, generally, hyperspheres of integer radii in $\mathbb{R}^3$ and $\mathbb{R}^n$, respectively. In Section 3, we then briefly elaborate on the discretization of circles in $\mathbb{R}^2$, and present some  findings which show that the numerical value of $\pi$ can be recovered in the asymptotic limit using solely the Manhattan distance ($\ell^1$-norm), thus providing an interesting link between Euclidean geometry and geometrical constructions on $\mathbb{Z}^2$. 


\section{The signum algorithm}
\label{S_SignumAlgorithm}

In order to construct a valid path on $\mathbb{Z}^2$ which approximates a circle of integer radius $r$ in $\mathbb{R}^2$, we follow an approach similar to that used in most of the known digital circle algorithms, namely utilizing a cost function to assign points on $\mathbb{Z}^2$ to the digital circle. For reasons of symmetry and notational simplicity, we restrict throughout the paper to constructing a quarter circle in the upper right quadrant starting from the horizontal axis, and assume the origin of the circle $\boldsymbol{o} = (0,0)$. 


\subsection{Recursive construction of a valid circular path on $\mathbb{Z}^2$}
\label{SS_Recursion}

\begin{figure}[t!]
\centering
\includegraphics[width=\linewidth]{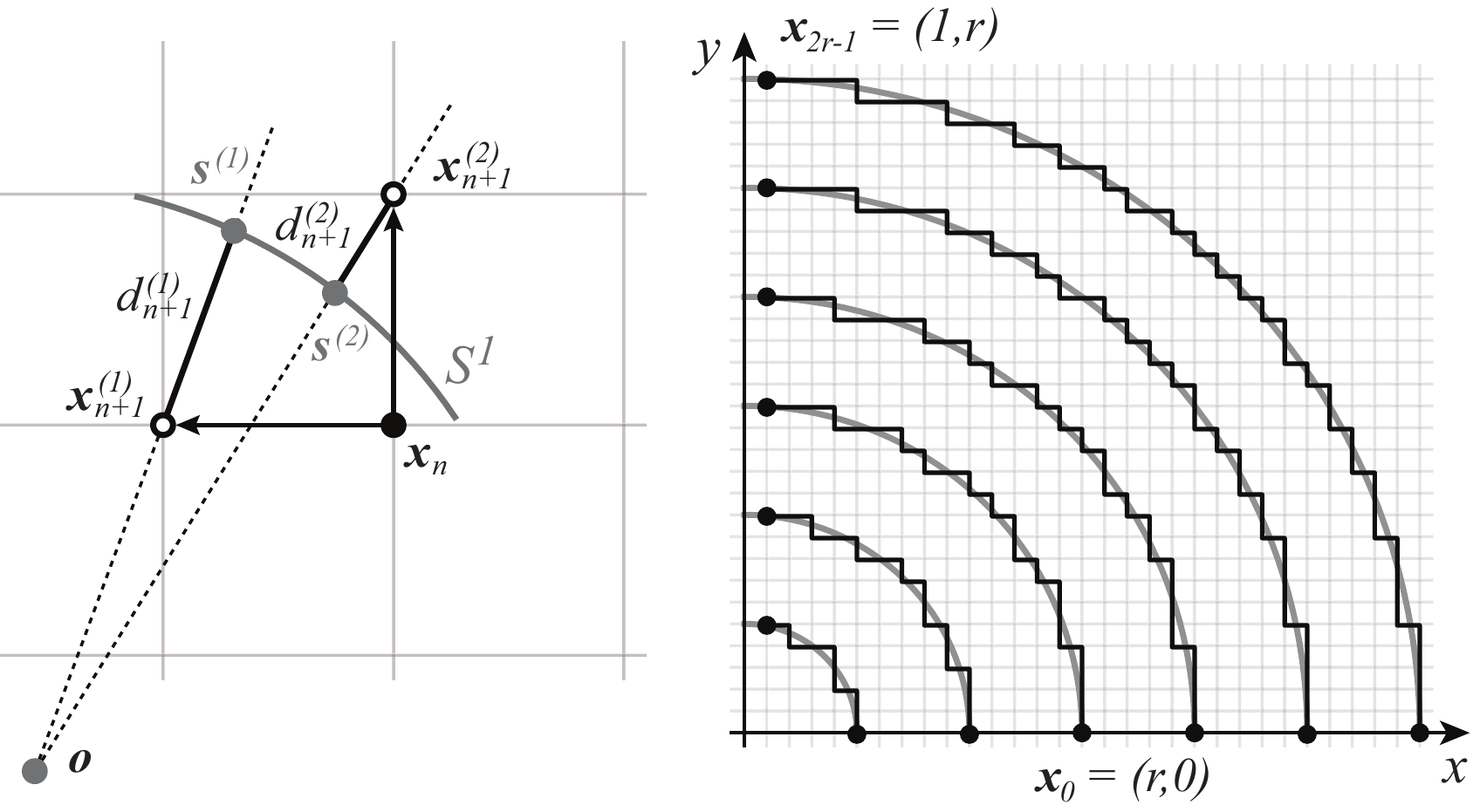}
\caption{\label{Fig_2}
Recursive construction of a circular path on $\mathbb{Z}^2$ in the upper right quadrant, approximating $S^1 \subset \mathbb{R}^2$ (left; see text for explanation), and examples of digital circles of various integer radii ($r=5,10,15,20,25,30$) constructed by using the signum algorithm (right).
}  
\end{figure}

Let $\mathcal{S}^1$ denote a circular path on $\mathbb{Z}^2$, and $S^1$ a circle on $\mathbb{R}^2$. Given $\boldsymbol{x}_n = (x_n,y_n) \in \mathcal{S}^1$ with $x_n, y_n \in \mathbb{Z}, n \in \mathbb{N}$, there are only two possibilities for the unassigned neighbouring point $\boldsymbol{x}_{n+1}$ along the circular path (Fig.~\ref{Fig_2}, left), namely
\begin{equation}
\label{Eq_xn1}
\boldsymbol{x}_{n+1} = (x_{n+1},y_{n+1}) =
\left\{ 
\begin{array}{l}
\boldsymbol{x}_{n+1}^{(1)} = (x_n-1,y_n), \\[0.2em]
\text{or, } \boldsymbol{x}_{n+1}^{(2)} = (x_n,y_n+1).
\end{array}
\right.
\end{equation}
In order to decide between $\boldsymbol{x}_{n+1}^{(1)}$ and $\boldsymbol{x}_{n+1}^{(2)}$, we utilize a cost function based on a minimum criterion. To that end, consider the intersections $\boldsymbol{s}^{(1)}$, $\boldsymbol{s}^{(2)}$ on $S^1$ of lines through $\boldsymbol{o}$ and $\boldsymbol{x}_{n+1}^{(1)}$, $\boldsymbol{x}_{n+1}^{(2)}$, respectively. The line segments $\overline{\boldsymbol{s}^{(1)} \boldsymbol{x}_{n+1}^{(1)}}$ and $\overline{\boldsymbol{s}^{(2)} \boldsymbol{x}_{n+1}^{(2)}}$ have a respective Euclidean length of
\begin{equation}
\label{Eq_dn11}
d_{n+1}^{(1)} 
= \left| r - \lVert \boldsymbol{x}_{n+1}^{(1)} \rVert_2 \right| 
= \left| r - \sqrt{(x_n-1)^2 + y_n^2} \right|
\end{equation}
and
\begin{equation}
\label{Eq_dn12}
d_{n+1}^{(2)} 
= \left| r - \lVert \boldsymbol{x}_{n+1}^{(2)} \rVert_2 \right| 
= \left| r - \sqrt{x_n^2 + (y_n+1)^2} \right| ,
\end{equation}
where $\lVert \boldsymbol{x} \rVert_2 = \sqrt{x^2+y^2}$ denotes the $\ell^2$-norm (Euclidean norm) in $\mathbb{R}^2$. With this, the minimization criterion is then given by
\begin{equation}
\label{Eq_xn1a}
\boldsymbol{x}_{n+1} = 
\left\{ 
\begin{array}{ll}
\boldsymbol{x}_{n+1}^{(1)} & \text{ if } d_{n+1}^{(1)} \leq d_{n+1}^{(2)} \\[0.2em]
\boldsymbol{x}_{n+1}^{(2)} & \text{ if } d_{n+1}^{(1)} > d_{n+1}^{(2)} .
\end{array}
\right.
\end{equation}
We note that the equal sign in the case $\boldsymbol{x}_{n+1} = \boldsymbol{x}_{n+1}^{(1)}$ is convention to account for the unlikely scenario that $d_{n+1}^{(1)} = d_{n+1}^{(2)}$. If $d_{n+1}^{(1)}$ and $d_{n+1}^{(2)}$ are equal, both $\boldsymbol{x}_{n+1}^{(1)}$ and $\boldsymbol{x}_{n+1}^{(2)}$ are equally valid neighbours of $\boldsymbol{x}_{n}$, and we choose, without loss of generality, $\boldsymbol{x}_{n+1}^{(1)}$.

To construct the associated cost function, we define
\begin{equation}
s_n := \sgn(\Delta_n)
\end{equation}
with 
\begin{equation}
\Delta_n := d_{n+1}^{(1)} - d_{n+1}^{(2)}
= \left| r - \sqrt{(x_n-1)^2 + y_n^2} \right| - \left| r - \sqrt{x_n^2 + (y_n+1)^2} \right|,
\end{equation}
and
\begin{equation}
\label{Eq_Sgn}
\sgn(x) =
\left\{
\begin{array}{ll}
-1 & \text{ if } x \leq 0 \\
1  & \text{ if } x > 0
\end{array}
\right.
\end{equation}
denoting the signum function. Please note that (\ref{Eq_Sgn}) slightly deviates from the commonly used notion of the signum function in that it assigns to $x=0$ a value $\sgn(0) = -1$ instead of $\sgn(0) = 0$. This redefinition allows to accommodate the unlikely case $d_{n+1}^{(1)} = d_{n+1}^{(2)}$ in (\ref{Eq_xn1a}), and, again, does not lead to loss of generality. With this, (\ref{Eq_xn1a}) takes the form
\begin{equation}
\label{Eq_xn1b}
\boldsymbol{x}_{n+1} = (x_{n+1},y_{n+1})
\left\{ 
\begin{array}{ll}
\boldsymbol{x}_{n+1}^{(1)} = (x_n-1, y_n) & \text{ if } s_n = -1 \\[0.2em]
\boldsymbol{x}_{n+1}^{(2)} = (x_n, y_n+1) & \text{ if } s_n = 1 .
\end{array}
\right.
\end{equation}
Utilizing the signum function (\ref{Eq_Sgn}), we can then rewrite (\ref{Eq_xn1b}) in algebraic form as
\begin{equation}
\label{Eq_xn1c}
\left\{
\begin{array}{l}
x_{n+1} = \frac{1}{2} ( 1 - s_n ) ( x_n - 1 ) + \frac{1}{2} ( 1 + s_n ) x_n \\[0.2em]
y_{n+1} = \frac{1}{2} ( 1 - s_n ) y_n + \frac{1}{2} ( 1 + s_n ) ( y_n + 1 ) .
\end{array}
\right.
\end{equation}

We observe that, by construction, the circular path $\mathcal{S}^1$ intersects in the considered upper right quadrant with the horizontal and vertical axis at $(r,0)$ and $(0,r)$, respectively. As the Manhattan distance between these two intersection points counts the number of points on $\mathbb{Z}^2$ along a valid circular path $\mathcal{S}^1$, each quadrant will contribute $2r$ points to $\mathcal{S}^1$. With this, after simplification of (\ref{Eq_xn1c}), we can then formulate the following

\begin{proposition}
\label{Prop_SignumAlgorithm}
A valid circular path $\mathcal{S}^1 \subset \mathbb{Z}^2$ approximating a circle $S^1 \subset \mathbb{R}^2$ with radius $r \in \mathbb{N}$ and origin $\boldsymbol{o} = (0,0)$ in the upper right quadrant is a set $\{ \boldsymbol{x}_n \}$ of $2r$ points $\boldsymbol{x}_n = (x_n,y_n)$ with $x_n, y_n \in \mathbb{Z}$ obeying the algebraic recursions
\begin{equation}
\label{Eq_S1Algorithm}
\left\{
\begin{array}{l}
x_0 = r , x_{n+1} = x_n + \frac{1}{2} s_n - \frac{1}{2} \\[0.2em]
y_0 = 0 , y_{n+1} = y_n + \frac{1}{2} s_n + \frac{1}{2} ,
\end{array}
\right.
\end{equation}
where $n \in [0,2r-1], n \in \mathbb{N}$, and 
\begin{equation}
\label{Eq_Cost}
s_n = \sgn(\Delta_n)
\end{equation}
with
\begin{equation}
\Delta_n = \left| r - \sqrt{(x_n-1)^2 + y_n^2} \right| - \left| r - \sqrt{x_n^2 + (y_n+1)^2} \right| 
\end{equation}
denoting the cost function.
\end{proposition}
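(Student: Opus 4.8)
My plan is to establish the proposition in four stages: (i) verify that the algebraic recursions (\ref{Eq_S1Algorithm}) are exactly the simplified form of (\ref{Eq_xn1c}); (ii) observe that every step moves one unit either to the left or upward; (iii) deduce from this monotone step structure that the resulting point set is a valid path in the sense of Definition~\ref{Def_ValidPath}; and (iv) show that the path runs from $(r,0)$ to $(0,r)$, which pins down both the initial data and the count of $2r$ points.

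For (i) I would simply expand the two right-hand sides of (\ref{Eq_xn1c}), collecting the terms linear in $s_n$ and the constant terms, to obtain $x_{n+1} = x_n + \tfrac{1}{2}s_n - \tfrac{1}{2}$ and $y_{n+1} = y_n + \tfrac{1}{2}s_n + \tfrac{1}{2}$; no use of $s_n^2$ is needed, since (\ref{Eq_xn1c}) is already affine in $s_n$. This is routine and fits in one or two lines. Substituting the two admissible values $s_n \in \{-1,+1\}$ then shows that $s_n = -1$ yields $\boldsymbol{x}_{n+1} = (x_n - 1, y_n)$ and $s_n = +1$ yields $\boldsymbol{x}_{n+1} = (x_n, y_n + 1)$, i.e. each iteration realizes exactly one of the two candidate moves in (\ref{Eq_xn1}) selected by the minimization rule (\ref{Eq_xn1a}). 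Consequently the $x$-coordinate is non-increasing and the $y$-coordinate non-decreasing along the path.

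Step (iii) I expect to be the cleanest part. Because every move decreases $x$ by one or increases $y$ by one, for any two indices $m < n$ the displacement $\boldsymbol{x}_n - \boldsymbol{x}_m$ has non-positive first entry and non-negative second entry, and the number of left-moves plus up-moves between them equals $n - m$. Hence $\lVert \boldsymbol{x}_n - \boldsymbol{x}_m \rVert_1 = |x_n - x_m| + |y_n - y_m| = n - m$; in particular $\lVert \boldsymbol{x}_n - \boldsymbol{x}_m \rVert_1 = 1$ if and only if $|n - m| = 1$. Therefore each $\boldsymbol{x}_n$ has as its only unit-$\ell^1$ neighbours its immediate predecessor and successor along the path — at most two, and exactly two for interior points — which is precisely the condition of Definition~\ref{Def_ValidPath}.

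The main obstacle is step (iv): proving that the recursion terminates at $(0,r)$ without leaving the quadrant, for only then does the elementary count "exactly $r$ left-moves and $r$ up-moves, hence $2r$ edges" apply. I would prove this by examining the sign of $\Delta_n$ on the two relevant boundaries. First I would check the initial step at $(r,0)$ and show $\Delta_0 > 0$, so $y$ leaves $0$ at once and thereafter $y \ge 1$. Next, at any boundary point $(0,y)$ with $1 \le y \le r-1$ I would show $\Delta_n > 0$ (forcing an up-move, so $x$ never becomes negative), and at any $(x,r)$ with $x \ge 1$ I would show $\Delta_n < 0$ (forcing a left-move, so $y$ never exceeds $r$). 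Each of these reduces to comparing two nested radicals and is settled by squaring; the only degenerate equality occurs at the origin $(0,0)$, which never lies on the path. Since each step strictly changes $x$ or $y$ within the bounded box $[0,r]^2$, the path cannot stall or loop and is forced to reach the unique corner $(0,r)$ at which both candidate moves would overshoot. This gives $\boldsymbol{x}_{2r} = (0,r)$, so the $2r$ points indexed by $n \in [0,2r-1]$ are exactly those on the quarter arc excluding the shared endpoint $(0,r)$, in agreement with the Manhattan-distance count noted before the statement. A shortcut worth attempting is to exploit the reflection symmetry of both the arc and the decision rule across the diagonal $y = x$ (composed with path reversal), which would yield the endpoint claim more structurally, but I expect the direct boundary estimates to be the most transparent route.
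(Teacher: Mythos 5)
Your proposal is correct, and it is in fact more complete than what the paper itself provides: the paper never gives a formal proof of Proposition~\ref{Prop_SignumAlgorithm}. There, the proposition is simply the endpoint of the derivation in Section~\ref{SS_Recursion} --- the minimization criterion (\ref{Eq_xn1a}) is rewritten via the signum function as (\ref{Eq_xn1c}) and simplified to (\ref{Eq_S1Algorithm}), exactly your step (i) --- while the two claims that make the statement nontrivial, namely that the point set is a valid path in the sense of Definition~\ref{Def_ValidPath} and that the path joins $(r,0)$ to $(0,r)$ so that the Manhattan distance between the endpoints yields exactly $2r$ points per quadrant, are asserted ``by construction'' and never argued. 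Your steps (iii) and (iv) supply precisely these missing arguments, and both go through: the telescoping identity $\lVert \boldsymbol{x}_n - \boldsymbol{x}_m \rVert_1 = n-m$ (valid because every step is a unit left- or up-move) immediately gives validity, and the boundary sign checks work out, since at $(0,y)$ with $1 \le y \le r-1$ one has $\Delta_n = (y+1) - \sqrt{1+y^2} > 0$ (forcing an up-move), while at $(x,r)$ with $x \ge 1$ one has $\Delta_n = \sqrt{(x-1)^2+r^2} - \sqrt{x^2+(r+1)^2} < 0$ (forcing a left-move), so the path is confined to $[0,r]^2$; combined with $s_0 = 1$ (so the origin, the only tie point, is never visited) and the count that after $2r$ steps the numbers $L$ of left-moves and $U$ of up-moves satisfy $L+U=2r$, $L \le r$, $U \le r$, hence $L=U=r$ and $\boldsymbol{x}_{2r} = (0,r)$, this pins down the $2r$-point claim. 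In short, your route buys a genuine theorem with proof where the paper offers a summary of the algorithm's construction; the cost is only the extra page of elementary estimates, which is well spent.
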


As the proposed algorithm makes solely use of the signum function, we will, for notational convenience, refer to as \emph{signum algorithm} in the remainder of this paper. Furthermore, we note that 
\begin{equation*}
\Delta_0 = 1-|r-\sqrt{r^2+1}| \geq 2 - \sqrt{2} > 0,
\end{equation*}
$\forall r \geq 1$, thus $s_0 = 1$. Figure~\ref{Fig_2} (right) shows representative examples of digital circles of various integer radii, constructed using the signum algorithm.

Proposition~\ref{Prop_SignumAlgorithm} provides a recursive algorithm for constructing digital circles of integer radii on $\mathbb{Z}^2$. Starting at $\boldsymbol{x}_0 = (r,0)$, this algorithm yields $2r$ successive points forming a valid circular path in the upper right quadrant on the integer lattice. This contrasts, for instance, the most widely used Bresenham \cite{Bresenham77} and Midpoint \cite{FoleyEA90} algorithms, which deliver only about 70\% of the points necessary for a valid circular path on $\mathbb{Z}^2$ (see Fig.~\ref{Fig_1}). Moreover, in contrast to many known digital circle algorithms, the computational implementation of the signum algorithm does not require decision trees or case distinctions, but solely relies on the signum function to generate a valid path. Such an algebraic formulation has the advantage of being mathematical tractable and allowing for rigorous manipulations. Specifically, due to the special properties of the signum function $\sgn(x): \mathbb{R} \rightarrow \{-1,1\}$, the cost function (\ref{Eq_Cost}) can further be simplified, as shown in the next section. 

Finally, we note that the geometrical basis and algebraic representation of the signum algorithm allows for direct generalization to higher dimensions. Specifically, for each given $(1/2^n)^{\text{th}}$ hypersphere in $\mathbb{R}^n$ (the generalization of the quarter circle in $\mathbb{R}^2$), Eq.~(\ref{Eq_xn1}) must be extended to  encompass $n$ possible neighbours for each given point along a valid ``hypercircular path''. Generalizing the Euclidean distance of the associated line segments, Eqs.~(\ref{Eq_dn11}) and (\ref{Eq_dn12}), to $\mathbb{R}^n$ will then yield a number of minimization criteria corresponding to (\ref{Eq_xn1a}) which can be expressed by utilizing the signum function alone, and lead to a recursive algorithm constructing a $(n-1)$-dimensional hypercircular ``path'' of integer radius on the $n$-dimensional integer lattice $\mathbb{Z}^n$.


\subsection{Simplification of the cost function}
\label{SS_CostFunction}

The computational complexity of the digital circle algorithm presented in Proposition~\ref{Prop_SignumAlgorithm} is carried by the argument of the cost function, which requires to evaluate the square root of integer numbers. However, as we show below, due to the properties of the signum function, $\Delta_n$ can be significantly simplified. To that end, we first formulate 

\begin{lemma}
\label{Lemma_sgn}
The signum function $\sgn(x): \mathbb{R} \rightarrow \{-1,1\}$ with
\begin{equation}
\label{Eq_sgn}
\sgn(x) =
\left\{
\begin{array}{ll}
-1 & \text{ if } x \leq 0 \\
1  & \text{ if } x > 0
\end{array}
\right.
\end{equation}
is subject to the following property:
\begin{equation}
\label{Eq_sgn1}
\sgn(x-y) = \sgn(f(x)-f(y))
\end{equation}
for all $x,y \in \mathbb{R}: x,y \geq 0$ and strict monotonically increasing functions $f(x): \mathbb{R} \rightarrow \mathbb{R}$. Moreover, $\forall x \in \mathbb{R}: x \neq 0$ and $a \in \mathbb{R}$
\begin{equation}
\label{Eq_sgn2}
\sgn(ax) = \left\{
\begin{array}{ll}
\sgn(x)  & \text{ if } a > 0 \\
-\sgn(x) & \text{ if } a < 0.
\end{array}
\right.
\end{equation}
\end{lemma}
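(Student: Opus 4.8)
The plan is to establish both identities directly from the definition \eqref{Eq_sgn}, proceeding by case analysis on the sign of the arguments and paying close attention to the nonstandard convention $\sgn(0)=-1$, which is precisely what makes the boundary cases close.

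For the first property \eqref{Eq_sgn1}, I would split on whether $x>y$ or $x\le y$. If $x>y$, then $x-y>0$ gives $\sgn(x-y)=1$; strict monotonicity of $f$ yields $f(x)>f(y)$, hence $f(x)-f(y)>0$ and $\sgn(f(x)-f(y))=1$, so both sides agree. If instead $x\le y$, then $\sgn(x-y)=-1$, and I would further distinguish $x<y$ from $x=y$: when $x<y$ strict monotonicity gives $f(x)<f(y)$, so $\sgn(f(x)-f(y))=-1$; when $x=y$ we have $f(x)-f(y)=0$, and here the convention $\sgn(0)=-1$ delivers exactly $-1$, matching $\sgn(x-y)=\sgn(0)=-1$. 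Note that the hypothesis $x,y\ge 0$ is not actually used once $f$ is taken strictly increasing on all of $\mathbb{R}$; it is retained to cover the intended applications (e.g. $f(t)=t^2$) in which monotonicity is only available on $[0,\infty)$.

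For the second property \eqref{Eq_sgn2}, I would simply enumerate the four sign combinations of $a$ and $x$ (with $a\neq 0$ implicit and $x\neq 0$ assumed). For $a>0$ the product $ax$ has the same sign as $x$, so $\sgn(ax)=\sgn(x)$ in both subcases $x>0$ and $x<0$. For $a<0$ the product $ax$ has the opposite sign, so $\sgn(ax)=-\sgn(x)$ in both subcases. The restriction $x\neq 0$ is essential here and not merely cosmetic: at $x=0$ one has $\sgn(0)=-1$ while the claimed right-hand side for $a<0$ would read $-\sgn(0)=+1$, so the identity genuinely fails at the origin under this convention.

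Neither part presents a real obstacle; the only point requiring care is the bookkeeping of the $\sgn(0)=-1$ convention, which silently forces the $x=y$ boundary in \eqref{Eq_sgn1} onto the $-1$ branch and dictates the exclusion $x\neq 0$ in \eqref{Eq_sgn2}. I would therefore present the argument as two short sign tables rather than as a computation, making the role of the convention explicit at each boundary.
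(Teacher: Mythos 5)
Your case analysis is correct and is exactly the unfolding of the definition that the paper has in mind: the paper's entire proof states that both identities are ``self-evident from the definition of the signum function,'' so you have simply supplied the details it omits. Your two side remarks --- that the hypothesis $x,y \geq 0$ is redundant once $f$ is strictly increasing on all of $\mathbb{R}$, and that $x \neq 0$ is genuinely needed because $\sgn(0) = -1$ would break the $a<0$ case --- are both correct and sharpen the statement beyond what the paper records.
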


\begin{proof}
Eqs.~(\ref{Eq_sgn1}) and (\ref{Eq_sgn1}) are self-evident from the definition of the signum function (\ref{Eq_sgn}).
\end{proof}

Utilizing Lemma~\ref{Lemma_sgn}, we can now formulate

\begin{proposition}
The cost function $s_n$ in Proposition~\ref{Prop_SignumAlgorithm} is equivalent to
\begin{equation}
\label{Eq_CostSimplified}
s_n = -\sgn\left(
a_n + \frac{r}{\sqrt{2}} \left( \sqrt{(a_n-1)^2+c_n^2} - \sqrt{(a_n+1)^2+c_n^2} \right)
\right),
\end{equation}
where $a_n = x_n + y_n$ with $n \in [0,2r-1], n \in \mathbb{N}$ obeys the recursion
\begin{equation}
\label{Eq_anRec}
a_0 = r , a_{n+1} = a_n + s_n
\end{equation}
and $c_n = r-n-1$. Furthermore, for $r > 4$, the cost function can be approximated by
\begin{equation}
\label{Eq_CostApproximated}
s_n = -\sgn\left( a_n^2 + c_n^2 + 1 - 2r^2 \right).
\end{equation}
\end{proposition}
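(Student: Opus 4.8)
The plan is to prove both identities by the same device: I convert the modulus-based cost $\Delta_n$ into the sign of a polynomial in the new variables $a_n$ and $c_n$, using the monotonicity property~(\ref{Eq_sgn1}) of the signum function together with a change of coordinates that diagonalises the geometry. Throughout, write $A = \lVert \boldsymbol{x}_{n+1}^{(1)} \rVert_2 = \sqrt{(x_n-1)^2+y_n^2}$ and $B = \lVert \boldsymbol{x}_{n+1}^{(2)} \rVert_2 = \sqrt{x_n^2+(y_n+1)^2}$, so that $\Delta_n = |r-A| - |r-B|$, and note $a_n = x_n+y_n > 0$ for every point of the quarter path. First I would extract two consequences of the recursion~(\ref{Eq_S1Algorithm}): subtracting its two lines gives $x_{n+1}-y_{n+1} = x_n-y_n-1$, so by induction $x_n - y_n = r-n = c_n+1$, while adding them gives $a_{n+1} = a_n+s_n$ with $a_0=r$, which is~(\ref{Eq_anRec}). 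Solving $x_n+y_n=a_n$ and $x_n-y_n=c_n+1$ for $x_n,y_n$ and completing the square then yields the identities $2A^2 = (a_n-1)^2+c_n^2$, $2B^2 = (a_n+1)^2+c_n^2$, and $A^2-B^2 = -2a_n$. These are the computational heart of both claims, and I would establish them once at the outset.

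For the exact form~(\ref{Eq_CostSimplified}) I would apply~(\ref{Eq_sgn1}) with the strictly increasing $f(t)=t^2$ to the nonnegative numbers $|r-A|$ and $|r-B|$, obtaining $s_n = \sgn(\Delta_n) = \sgn\big((r-A)^2-(r-B)^2\big)$. Expanding, $(r-A)^2-(r-B)^2 = (A^2-B^2)+2r(B-A) = -2a_n+2r(B-A)$; discarding the positive factor $2$ via~(\ref{Eq_sgn2}) and writing $\tfrac{r}{\sqrt2}\big(\sqrt{(a_n-1)^2+c_n^2}-\sqrt{(a_n+1)^2+c_n^2}\big) = r(A-B)$ through the square-completion identities produces exactly~(\ref{Eq_CostSimplified}). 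The one delicate point is the final reflection $\sgn(z) = -\sgn(-z)$, which by~(\ref{Eq_sgn2}) with $a=-1$ holds for $z\neq 0$ but would fail at $z=0$ under the convention $\sgn(0)=-1$; I would dispose of this by showing the tie $z = r(B-A)-a_n = 0$, equivalently $A+B = 2r$, cannot occur, since the identities turn it into $a_n^2 = 2r^2$, which has no integer solution.

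For the approximate form~(\ref{Eq_CostApproximated}) I would first distil the exact criterion. Using $a_n = \tfrac12(B^2-A^2)$ one factors $r(B-A)-a_n = \tfrac12(B-A)\big(2r-(A+B)\big)$, and since $a_n>0$ forces $B>A$, relation~(\ref{Eq_sgn2}) collapses the cost to $s_n = \sgn\big(2r-(A+B)\big)$. One more squaring through~(\ref{Eq_sgn1}) gives $s_n = \sgn\big(4r^2-(A+B)^2\big)$. Setting $S := a_n^2+c_n^2+1 = A^2+B^2$ and $P := S-2r^2$, the elementary rearrangement $4r^2-(A+B)^2 = (A-B)^2 - 2P$ exhibits the claimed form $-\sgn(P)$ as the result of discarding the nonnegative term $(A-B)^2$; the whole content of the claim is thus that doing so cannot flip the sign.

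The main obstacle is precisely this sign-preservation, and I expect to settle it by pitting a geometric bound against an arithmetic one. The reverse triangle inequality gives $|A-B| \le \lVert (1,1) \rVert_2 = \sqrt2$, hence $(A-B)^2 \le 2$; meanwhile $P \in \mathbb{Z}$, so either $P \le 0$ or $P \ge 1$. If $P \le 0$, then $4r^2-(A+B)^2 = (A-B)^2 - 2P \ge (A-B)^2 > 0$ (strict since $a_n>0$ forces $A\neq B$), so $s_n=+1=-\sgn(P)$; if $P \ge 1$, then $(A-B)^2 \le 2 \le 2P$ gives $4r^2-(A+B)^2 \le 0$, so $s_n=-1=-\sgn(P)$, the boundary landing correctly thanks to $\sgn(0)=-1$. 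This reproduces~(\ref{Eq_CostApproximated}) in every case, and the truly delicate part is just this interplay of the $\sqrt2$-bound with the integrality of $P$ and the nonstandard value of $\sgn(0)$; everything preceding it is routine once the square-completion identities are in hand. I would note in passing that the argument uses no lower bound on $r$, so the hypothesis $r>4$ is comfortably met and could in fact be removed.
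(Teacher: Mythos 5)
With your notation $A=\lVert \boldsymbol{x}_{n+1}^{(1)} \rVert_2$, $B=\lVert \boldsymbol{x}_{n+1}^{(2)} \rVert_2$, $P=a_n^2+c_n^2+1-2r^2$: your derivation of the exact form (\ref{Eq_CostSimplified}) is essentially the paper's own proof — square the nonnegative distances via Lemma~\ref{Lemma_sgn}, expand, and pass to $a_n$, $c_n$ (your square-completion identities are exactly the paper's change of variables $a_n=x_n+y_n$, $b_n=x_n-y_n$, $c_n=b_n-1$). You are more careful than the paper on one point: the final reflection $\sgn(z)=-\sgn(-z)$ fails at $z=0$ under the convention $\sgn(0)=-1$, an edge case the paper silently steps over when it pulls out the factor $-2$. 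However, your disposal of that case is not correct as stated: the tie $r(B-A)=a_n$, equivalently $A+B=2r$, does \emph{not} reduce to $a_n^2=2r^2$; combining $A+B=2r$ with $A^2+B^2=a_n^2+c_n^2+1$ and $B^2-A^2=2a_n$ yields $a_n^2=2r^2P$, and one still needs an arithmetic input to kill this (for instance: $P\in\mathbb{Z}$ together with $0<(A-B)^2\le 2$ forces $P=1$ and $(A-B)^2=2$, and only then does $(B^2-A^2)^2=(A+B)^2(A-B)^2$ give $4a_n^2=8r^2$, i.e.\ $a_n^2=2r^2$, the desired contradiction; alternatively, $\gcd(2r^2-1,2r^2)=1$ forces $(2r^2-1)\mid c_n^2$, hence $c_n=0$ and again $a_n^2=2r^2$). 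So your conclusion that ties cannot occur is right, and your integrality instinct is the correct one, but the step as written has a gap — one that your own part-two machinery happens to fill. (Both you and the paper also assert $a_n>0$, resp.\ $a_n\ge r$, without proof; that omission is shared.)

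Your treatment of (\ref{Eq_CostApproximated}) is genuinely different from the paper's and is a substantial improvement. The paper expands $(B-A)^2$ in a power series in $2a_n/(a_n^2+c_n^2+1)$ and truncates at the leading term ``for large $r$'' — an approximation argument that never actually establishes validity for $r>4$. You instead factor $(r-A)^2-(r-B)^2=(B-A)\bigl(2r-(A+B)\bigr)$, use $B>A$ and one more squaring to get $s_n=\sgn\bigl(4r^2-(A+B)^2\bigr)=\sgn\bigl((A-B)^2-2P\bigr)$, and then play the reverse triangle inequality $(A-B)^2\le\lVert(1,1)\rVert_2^2=2$ against the integrality of $P$: if $P\le 0$ the argument is strictly positive (since $A\neq B$), while if $P\ge 1$ it is $\le 0$, with the boundary case $(A-B)^2=2P$ absorbed precisely by the convention $\sgn(0)=-1$. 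This proves that (\ref{Eq_CostApproximated}) is not an approximation at all but an exact identity for every $r\ge 1$, simultaneously making the result rigorous and removing the hypothesis $r>4$. Modulo the tie-exclusion repair above, your argument is correct and strictly stronger than what the paper establishes.
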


\begin{proof}
First we will show (\ref{Eq_CostSimplified}). To that end, we observe that $f(x)=x^2$ for $x \geq 0$ obeys the condition of Lemma~\ref{Lemma_sgn}, thus
\begin{eqnarray*}
s_n 
& = & \sgn\left( \left| r - \sqrt{(x_n-1)^2 + y_n^2} \right|^2 - \left| r - \sqrt{x_n^2 + (y_n+1)^2} \right|^2 \right) \\
& = & \sgn\left( -2(x_n+y_n) - 2r \left( \sqrt{(x_n-1)^2+y_n^2} - \sqrt{x_n^2+(y_n+1)^2} \right) \right) \\
& = & -\sgn\left( x_n+y_n + r \left( \sqrt{(x_n-1)^2+y_n^2} - \sqrt{x_n^2+(y_n+1)^2} \right) \right) \\
& = & -\sgn\Big( a_n + \frac{r}{\sqrt{2}} \Big( \sqrt{a_n^2-2a_n+b_n^2-2b_n+2} \\
&   & \hspace*{28mm} - \sqrt{a_n^2+2a_n+b_n^2-2b_n+2} \Big) \Big),
\end{eqnarray*}
where in the last two steps Eq.~(\ref{Eq_sgn2}), $a_n := x_n + y_n$ and $b_n := x_n - y_n$ were used. Observing that $b_n$ obeys the recursion 
\begin{equation*}
b_0 = r, b_{n+1} = b_n - 1,
\end{equation*}
hence takes the explicit form $b_n = r-n$, and defining further 
\begin{equation}
\label{Eq_cn}
c_n := \sqrt{b_n^2 - 2b_n + 1} = r-n-1,
\end{equation}
we arrive at Eq.~(\ref{Eq_CostSimplified}).

To show (\ref{Eq_CostApproximated}), we first note that $a_n \geq r, \forall n \in [0, 2r-1]$, with the minimum taken at $n=0$. The maximum is reached for a point on the circular path which, when connected to the origin by a line in $\mathbb{R}^2$, takes an angle with the horizontal axis closest to $\pi/4$. As $a_n = x_n + y_n$ is, in the upper right quadrant, equivalent to the Manhattan distance of $(x_n,y_n)$, we can approximate
\begin{equation*}
\max_{n} a_n \approx r \cos\left(\frac{\pi}{4}\right) + r \sin\left(\frac{\pi}{4}\right) = \sqrt{2} r.
\end{equation*}
As any point on the circular path $\mathcal{S}^1$ does, by construction, reside at most $\sqrt{2}$ away from the closest point on $S^1$, we can securely assume that $a_n \leq \sqrt{2}(r+1), \forall n \in [0, 2r-1]$. Thus,
\begin{equation*}
\begin{array}{rcccl}
r & \leq & a_n & \leq & \sqrt{2}(r+1) \\
r^2 & \leq & a_n^2 & \leq & 2(r^2+2r+1).
\end{array}
\end{equation*}
Similarly, with (\ref{Eq_cn}), $c_n$ takes its minimum of $0$ at $n=r-1$, and its maximum of $r$ for $n=2r-1$. With this, we have the following inequality
\begin{equation*}
r^2+1 \leq a_n^2+c_n^2+1 \leq 3r^2+4r+3,
\end{equation*}
from which 
\begin{equation*}
\frac{2 a_n}{a_n^2+c_n^2+1} < 1
\end{equation*}
$\forall r \geq 4$ follows. With this, we can rewrite (\ref{Eq_CostSimplified}), using again Lemma~\ref{Lemma_sgn}, and obtain
\begin{equation*}
s_n = -\sgn\Big(
a_n^2 - \frac{r^2}{2} (a_n^2+c_n^2+1) \Big(
\sqrt{1 + \tfrac{2a_n}{a_n^2+c_n^2+1}} - \sqrt{1 - \tfrac{2a_n}{a_n^2+c_n^2+1}}
\Big) \Big).
\end{equation*}
Observing that
\begin{equation*}
\left( \sqrt{1-x} - \sqrt{1+x} \right)^2
= \sum\limits_{k=1}^{\infty} \binom{2(k-1)}{k-1} \frac{4}{2^{2k} k} x^{2k} 
\end{equation*}
$\forall x \in \mathbb{R}: |x| \leq 1$, we then expand, for $r \geq 4$, the argument of $s_n$ in a power series. This yields
\begin{equation*}
s_n = -\sgn\left(
a_n^2 - \frac{r^2}{2} (a_n^2+c_n^2+1) \sum\limits_{k=1}^{\infty} \binom{2(k-1)}{k-1} \frac{4}{2^{2k} k} \left(\frac{2a_n}{a_n^2+c_n^2+1}\right)^{2k} 
\right).
\end{equation*}
For large $r$, the sum in the last equation converges rapidly, and we can approximate $s_n$ by taking only the leading term $k=1$ into consideration, thus showing (\ref{Eq_CostApproximated}).
\end{proof}

We note that, whereas (\ref{Eq_Cost}) and (\ref{Eq_CostSimplified}) provide exact expressions for the cost function $s_n$, Eq.~(\ref{Eq_CostApproximated}) provides an approximation which, for $r \gg 1$, yields the same result as the exact expressions. However, using (\ref{Eq_CostApproximated}) will significantly lower the computational cost of constructing a digital circle, as here only integer operations are involved. Finally, we remark that both the exact alternative form of the cost function (\ref{Eq_CostSimplified}) and its approximation (\ref{Eq_CostApproximated}) are no longer given in terms of the coordinates $(x_n,y_n)$ of points along the circular path $\mathcal{S}^1$, but instead are functions of the Manhattan distance $a_n = |x_n| + |y_n|$ of each point $(x_n,y_n) \in \mathcal{S}^1$ to the center of the circle. The resulting finite sequence itself is subject to a recursion, see Eq.~(\ref{Eq_anRec}), and will be used in the next section to recover the numerical value of $\pi$ from a digital circle $\mathcal{S}^1 \subset \mathbb{Z}^2$.


\section{The search for $\pi$ on $\mathbb{Z}^2$}
\label{S_PiZ2}

By construction, each digital circle algorithm delivers, for any given radius $r$, a set of points on $\mathbb{Z}^2$ which, for increasing $r$, approximates with increasing precision $S^1 \subset \mathbb{R}^2$ when each pair of nearest neighbouring points is connected with a straight line in $\mathbb{R}^2$ (see Fig.~\ref{Fig_1}), eventually yielding $S^1$ for $r \rightarrow \infty$. However, if we restrict to $\mathbb{Z}^2$ with its $\ell^1$-norm, all valid circular paths will remain finitely distinct from $S^1$ even in the asymptotic case, as each path is bound to the lattice. To make matters worse, if we consider the distance of each point along the circular path to the origin, then we find that it is no longer constant. This, although being a known characteristic with amusing consequences of geometric spaces endowed with $\ell^1$-norm \cite{Krause87}, it is in direct conflict with the very original definition of a circle as put forth in Euclid's \emph{Elements} (Book I, \S 19). If we adhere to Euclid's circle definition in such a discrete space with $\ell^1$-norm, on the other hand, the discrete circle takes, in the continuum limit, the shape of a square rotated by $\pi/4$. Thus, in other words, a \emph{digital} circle and a \emph{discrete} circle are two distinct geometrical objects. 


\subsection{Reconciling digital and discrete circles}
\label{SS_DigitalDiscreteCircles}

Digital geometry defines a ``digital circle'' simply as a discrete approximation (or digitized model) of a circle in $\mathbb{R}^2$ obtained by searching for points on $\mathbb{Z}^2$ which are closest to $S^1$. Naturally, the form of each model will carry consequences for its underlying relationship to the circle on $\mathbb{R}^2$. We can thus interrogate the geometric properties of each model in $\mathbb{Z}^2$ and $\mathbb{R}^2$, specifically, explore the relationship between properties of the digital circle $\mathcal{S}^1 \subset \mathbb{Z}^2$, i.e. a circular path in a discrete space endowed with $\ell^1$-norm, and the properties of $S^1 \subset \mathbb{R}^2$, i.e. a circle in a continuous space endowed with $\ell^2$-norm. We will focus here on the defining constant of circles, $\pi$, and show below that the parametric and polar discretizations of the circle lead to an overestimate for $\pi$, measured both numerically and analytically, whereas the signum algorithm introduced in Section~\ref{S_SignumAlgorithm} allows to recover its correct value in a somewhat surprising fashion.

Before outlining the details of this interrogation, we note that, firstly, an alternative, and mathematically more rigorous, definition of a circle in $\mathbb{R}^2$ is given by its parametric representation. Specifically, a circle $S^1 \subset \mathbb{R}^2$ is the set of all points $(x,y) \in \mathbb{R}^2$ which satisfy the algebraic relation 
\begin{equation}
\label{Eq_circleP}
x^2 + y^2 = r^2,
\end{equation} 
where $r \in \mathbb{R}: r > 0$ is called the radius of the circle. Recalling Proposition~\ref{Prop_SignumAlgorithm}, a digital circle $\mathcal{S}^1 \subset \mathbb{Z}^2$ is the set of all points $(x,y) \in \mathbb{Z}^2$ satisfying a specific recursive algebraic relation corresponding to Eq.~(\ref{Eq_S1Algorithm}) in the upper right quadrant.

Secondly, although differences exist in the mathematical representation of the algorithmic search for points on $\mathbb{Z}^2$ closest to $S^1$, each digital circle algorithm utilizes the Euclidean norm in one form or another in its minimization criterion. The same holds for the signum algorithm presented here. However, the resulting cost function (\ref{Eq_CostSimplified}) and its approximation (\ref{Eq_CostApproximated}) are given in terms of $a_n = x_n + y_n$, which corresponds, in the upper right quadrant, to the Manhattan distance of the point $(x_n,y_n) \in \mathcal{S}^1$ to the origin. Taking both arguments together, it could be contended that the ``digital circle'' constructed by the signum algorithm is not only a digital model of $S^1 \subset \mathbb{R}^2$, but a valid discrete model of a circle in $\mathbb{Z}^2$, a space endowed with $\ell^1$-norm, with properties which, in the asymptotic limit, translate into those of $S^1$.


\subsection{$\pi$ in discretized circles}
\label{SS_piND}

To illustrate this crucial latter point, we will consider the defining constant of a circle in $\mathbb{R}^2$ (or hyperspheres in $\mathbb{R}^n $ in general), namely $\pi$, and ask whether $\pi$ can be obtained in a discrete space endowed with $\ell^1$-norm. To that end, we first recall how $\pi$ is obtained on $\mathbb{R}^2$ by calculating the circumference of the circle. Given the parametric representation of $S^1 \subset \mathbb{R}^2$, Eq.~(\ref{Eq_circleP}), we have $y=\pm\sqrt{r^2-x^2}$ and for the circumference $\mathcal{C}$, using the arc length,
\begin{equation}
\label{Eq_S}
\mathcal{C} 
= 2 \int\limits_{-r}^r \sqrt{1+\left(\frac{dy}{dx}\right)^2}
= 2 \int\limits_{-r}^r \text{d}x \,
\sqrt{1+\frac{x^2}{r^2-x^2}} 
= 2 \pi r.
\end{equation}

Equation~(\ref{Eq_S}) can be viewed as a definition of $\pi$ in terms of the ratio between the circumference of a circle and the (Euclidean) distance of each point on $S^1$ to the center, i.e.
\begin{equation}
\label{Eq_pi}
\pi := \frac{\mathcal{C}}{2r}
\end{equation}
for $r > 0$. Remaining for a moment in $\mathbb{R}^2$, but replacing the Euclidean distance $r$ by the Manhattan distance $a(x,y)=|x|+|y|$ of each point $(x,y) \in S^1$ to the center, we can define
\begin{equation}
\label{Eq_pi}
\pi(x,y) := \frac{\mathcal{C}}{2 a(x,y)} = \frac{4r}{a(x,y)},
\end{equation}
where we used the fact that the circumference of a circle in a space with $\ell^1$-norm is $\mathcal{C}=8r$. As mentioned above, as $a(x,y)$ changes depending on the point along the circle (see Fig.~\ref{Fig_3}, top left), $\pi(x,y)$ will be a function of 
$(x,y) \in S^1$, with values ranging between 4 and $2\sqrt{2}$ (see Fig.~\ref{Fig_3}, top right), and the value of $\pi$ residing in between these bounds. Using the parametric representation of a circle, 
\begin{equation}
\label{Eq_xy}
\left\{
\begin{array}{l}
x = r \cos(\varphi) \\
y = r \sin(\varphi)
\end{array}
\right.
\end{equation}
with $0 \leq \varphi \leq 2\pi$, we have 
\begin{equation}
\label{Eq_aphi}
a(x,y) \equiv a(r,\varphi) = |r \cos(\varphi)| + |r \sin(\varphi)|.
\end{equation}
With this, we can calculate the average of (\ref{Eq_pi}) over all points on $S^1$ (due to symmetry, it is sufficient to restrict to the upper right quadrant), which yields
\begin{equation}
\label{Eq_picont}
\overline{\pi}
= \frac{2}{\pi} \int\limits_0^{\pi/2} \text{d}\varphi \, \frac{4}{\cos(\varphi)+\sin(\varphi)}
= \frac{8}{\pi} \sqrt{2} \text{ arctanh}\left( \frac{1}{\sqrt{2}} \right)
\sim 3.17406.
\end{equation}
Note that the obtained value is independent of $r$. More interestingly, however, is the fact that the obtained value is close, but not identical, to $\pi$.

\begin{figure}[t!]
\centering
\includegraphics[width=\linewidth]{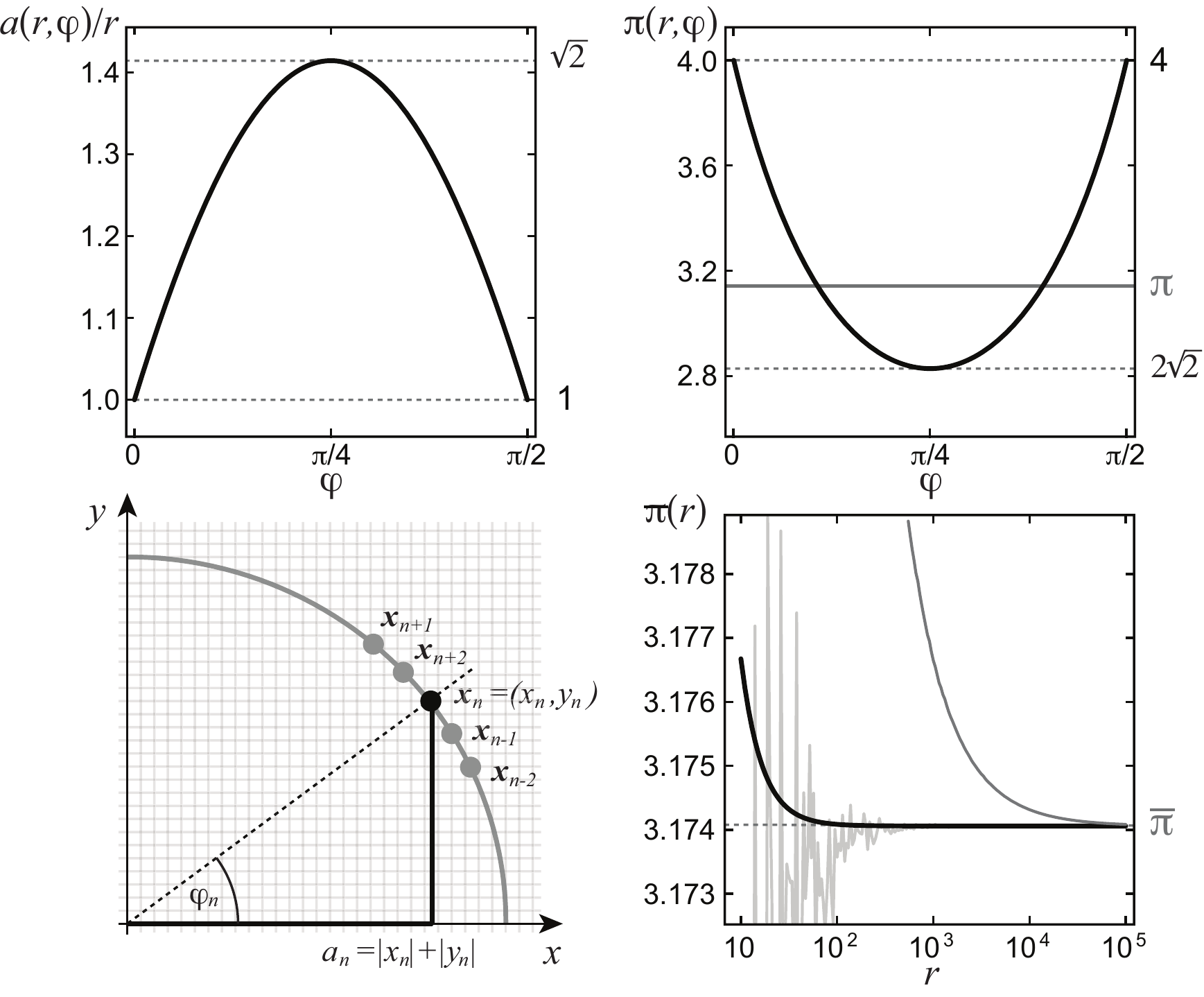}
\caption{\label{Fig_3}
Relative Manhattan distance $a(\varphi)/r$ (top left; see Eq.~(\ref{Eq_aphi})) and associated $\pi$-values (top right; see Eq.~(\ref{Eq_pi})) along points on $S^1 \subset \mathbb{R}^2$. Parametric discretization of the circle $S^1 \subset \mathbb{R}^2$ (bottom left; see text for explanation) and the resulting arithmetic mean of the $\pi_n$ values associated with each point on $S^1$ (see Eq.~(\ref{Eq_ApinND})) as function of the radius $r$ (bottom right; black: $a_n(r)$ given by Eq.~(\ref{Eq_xnyndigi}), light grey: $a_n(r)$ given by Eq.~(\ref{Eq_xnyndigiS}), dark grey: $a_n(r)$ given by Eq.~(\ref{Eq_xnyndigiR})). The asymptotic value $\overline{\pi}$ for $r \rightarrow \infty$, Eq.~(\ref{Eq_pidigi}), differs from $\pi$ in all cases.
}  
\end{figure}

The same holds true if we perform a parametric discretization of $S^1 \subset \mathbb{R}^2$ by introducing $2r$ discrete angles\begin{equation}
\label{Eq_xnyn}
\left\{
\begin{array}{l}
x_n = r \cos(\varphi_n) \\
y_n = r \sin(\varphi_n)
\end{array}
\right.
\end{equation}
with 
\begin{equation}
\varphi_n = \frac{n}{2r} \, \frac{\pi}{2},
\end{equation}
$n \in [0,2r-1], n \in \mathbb{N}$ (Fig.~\ref{Fig_3}, bottom left). In this case, remaining with the $\ell^1$-norm, we have 
\begin{equation}
\label{Eq_xnyndigi}
a(x_n,y_n) \equiv a_n(r) = |r \cos(\varphi_n)| + |r \sin(\varphi_n)|.
\end{equation}
Defining, similar to (\ref{Eq_pi}), $\pi$-values associated with each point along the now discretized circle according to
\begin{equation}
\label{Eq_piND}
\pi_n(r) := \frac{\mathcal{C}}{2 a_n(r)} = \frac{4r}{a_n(r)},
\end{equation}
we consider the arithmetic mean $A(\pi_n)$ of all $\pi_n(r)$, i.e. 
\begin{equation}
\label{Eq_ApinND}
A(\pi_n) = \frac{1}{2r} \sum\limits_{n=0}^{2r-1} \pi_n(r),
\end{equation}
and obtain
\begin{equation}
\label{Eq_ApinND1}
A(\pi_n) 
= 2 \sum\limits_{n=0}^{2r-1} \frac{1}{a_n(r)}
= \frac{2}{r} \sum\limits_{n=0}^{2r-1} \frac{1}{\cos\left(\frac{n\pi}{4r}\right)+\sin\left(\frac{n\pi}{4r}\right)}.
\end{equation}
To simplify the last equation, we first rewrite the denominator under the sum using 
\begin{equation*}
\sin(x) \pm \cos(y) = 2 \sin\left(\frac{1}{2}(x \pm y) \pm \frac{\pi}{4}\right) \cos\left(\frac{1}{2}(x \mp y) \mp \frac{\pi}{4}\right)
\end{equation*}
(\cite{GradshteynRyzhik07}, relation 1.314.9$^*$). With this, (\ref{Eq_ApinND1}) takes the form
\begin{eqnarray*}
A(\pi_n)
& = & \frac{\sqrt{2}}{r} \sum\limits_{n=0}^{2r-1} \frac{1}{\sin\left(\frac{n\pi}{4r}+\frac{\pi}{4}\right)} \\
& = & \frac{2 \sqrt{2}}{r} \sum\limits_{n=0}^{2r-1} \sum\limits_{k=0}^{\infty} \frac{(-1)^{k+1} (2^{2k-1}-1) B_{2k}}{(2k)!} \left(\frac{\pi}{4}\right)^{2k-1} \left( \frac{n}{r}+1 \right)^{2k-1},
\end{eqnarray*}
where, due to $\frac{\pi}{4} \leq (\frac{n\pi}{4r}+\frac{\pi}{4}) < \frac{3 \pi}{4}$ for all $r$, in the last step we used the power expansion of $1/\sin(x) \equiv \csc(x)$ in terms of Bernoulli numbers $B_n$. Splitting off the inner sum the $k=0$ term, and executing the sum over $n$, yields
\begin{eqnarray*}
A(\pi_n)
& = &  \frac{4 \sqrt{2}}{\pi} \sum\limits_{n=0}^{2r-1} \frac{1}{n+r} \\
& + &  \frac{2 \sqrt{2}}{r} \sum\limits_{n=0}^{2r-1} \sum\limits_{k=1}^{\infty} \frac{(-1)^{k+1} (2^{2k-1}-1) B_{2k}}{(2k)!} \left(\frac{\pi}{4}\right)^{2k-1} \left( \frac{n}{r}+1 \right)^{2k-1} \\
& = & \frac{4 \sqrt{2}}{\pi} \big( \Psi(3r) - \Psi(r) \big) \\
& + & 2 \sqrt{2} \sum\limits_{k=1}^{\infty} \frac{(-1)^{k+1} (2^{2k-1}-1) B_{2k}}{(2k)!} \left(\frac{\pi}{4}\right)^{2k-1} \frac{1}{r^{2k}} \\
&   & \hspace*{15mm} \times \big( \zeta(1-2k,r) - \zeta(1-2k,3r) \big),
\end{eqnarray*}
where $\Psi(x)$ denotes the digamma function and $\zeta(n,x)$ the Hurwitz zeta function. Exploiting
\begin{equation*}
\zeta(-n,x) = - \frac{B_{n+1}(x)}{n+1}
\end{equation*}
(see \cite{Apostol95}, Theorem 12.13), which holds for $n \geq 0$ and links the Hurwitz zeta to Bernoulli polynomials 
\begin{equation*}
B_{n}(x) = \sum\limits_{k=0}^n \binom{n}{k} B_{n-k} x^k,
\end{equation*}
we can further simplify $A(\pi_n)$ to
\begin{eqnarray*}
A(\pi_n)
& = & \frac{4 \sqrt{2}}{\pi} \big( \Psi(3r) - \Psi(r) \big) \\
& + & 2 \sqrt{2} \sum\limits_{k=1}^{\infty} \frac{(-1)^{k+1} (2^{2k-1}-1) B_{2k}}{(2k)! \, 2k} \left(\frac{\pi}{4}\right)^{2k-1} \frac{1}{r^{2k}} \big( B_{2k}(3r) - B_{2k}(r) \big).
\end{eqnarray*}
Observing that $B_{n}(x)$ are polynomials of degree $n$ in $x$, and recalling that our assessment aims at the asymptotic limit $r \rightarrow \infty$, the last equation yields
\begin{eqnarray*}
A(\pi_n)
& = & \frac{4 \sqrt{2}}{\pi} \big( \Psi(3r) - \Psi(r) \big) \\
& + & 2 \sqrt{2} \sum\limits_{k=1}^{\infty} \frac{(-1)^{k+1} (2^{2k-1}-1) B_{2k}}{(2k)! \, 2k} \left(\frac{\pi}{4}\right)^{2k-1} ( 3^{2k}-1 ) + \mathcal{O}\left( \tfrac{1}{r} \right).
\end{eqnarray*}
Performing now carefully the asymptotic limit $r \rightarrow \infty$, we finally obtain
\begin{eqnarray}
\label{Eq_pidigi}
\overline{\pi} 
& := & \lim_{r \rightarrow \infty} A(\pi_n) \nonumber \\
& = & \frac{2 \sqrt{2}}{\pi} \left( 2 \ln(3) + \ln(\tfrac{9}{8}) + \ln(8) - 2 \ln(16 (2-\sqrt{2}) + 2 \ln(\tfrac{16}{9} (\sqrt{2}+2)) \right) \nonumber \\
& = & \frac{4 \sqrt{2}}{\pi} \left( \ln(2+\sqrt{2}) - \ln(2-\sqrt{2}) \right) \nonumber \\
& \sim & 3.17406.
\end{eqnarray}
Thus, in the case of the performed parametric discretization of $S^1 \subset \mathbb{R}^2$ given in polar coordinates, the numerical value of $\overline{\pi}$, defined as the arithmetic mean of the $\pi$-values associated with each point along the discretized circle in a space with $\ell^1$-norm, converges to (\ref{Eq_pidigi}), expectedly in accordance with its continuum counterpart (\ref{Eq_picont}). 

We note, however, that the discretization performed above does, in general, not yield points $(x_n,y_n) \in \mathbb{Z}^2$. To ensure the latter, we must replace Eq.~(\ref{Eq_xnyndigi}) with
\begin{equation}
\label{Eq_xnyndigiS}
a_n(r) = \big\lfloor \, |r \cos(\varphi_n)| \, \big\rfloor + \big\lfloor \, |r \sin(\varphi_n)| \, \big\rfloor
\end{equation}
or
\begin{equation}
\label{Eq_xnyndigiR}
a_n(r) = \left\lfloor \, |r \cos(\varphi_n)| + \tfrac{1}{2} \, \right\rfloor + \left\lfloor \, |r \sin(\varphi_n)| + \tfrac{1}{2} \, \right\rfloor,
\end{equation}
where the former ``snaps'' the points along $S^1$ to integer coordinates on $\mathbb{Z}^2$ inside the circle, i.e.
\begin{equation}
\label{Eq_xnynS}
\left\{
\begin{array}{l}
x_n = \lfloor r \cos(\varphi_n) \rfloor \\[0.2em]
y_n = \lfloor r \sin(\varphi_n) \rfloor
\end{array}
\right.
\end{equation}
in the upper right quadrant, whereas the latter associates each point on $S^1$ to the nearest lattice points on $\mathbb{Z}^2$ by rounding independently each coordinate, i.e.
\begin{equation}
\label{Eq_xnynD}
\left\{
\begin{array}{l}
x_n = \lfloor r \cos(\varphi_n) + \frac{1}{2} \rfloor \\[0.2em]
y_n = \lfloor r \sin(\varphi_n) + \frac{1}{2} \rfloor
\end{array}
\right.
\end{equation}
in the upper right quadrant. However, even with these modifications and steps towards a valid discretization, or digital model, of the circle $S^1 \subset \mathbb{R}^2$ in $\mathbb{Z}^2$, the obtained values for $\overline{\pi}$ differ numerically from $\pi$ (see Fig.~\ref{Fig_3}, bottom right).


\subsection{$\pi$ on the digital circle}
\label{SS_piND}

\begin{figure}[t!]
\centering
\includegraphics[width=\linewidth]{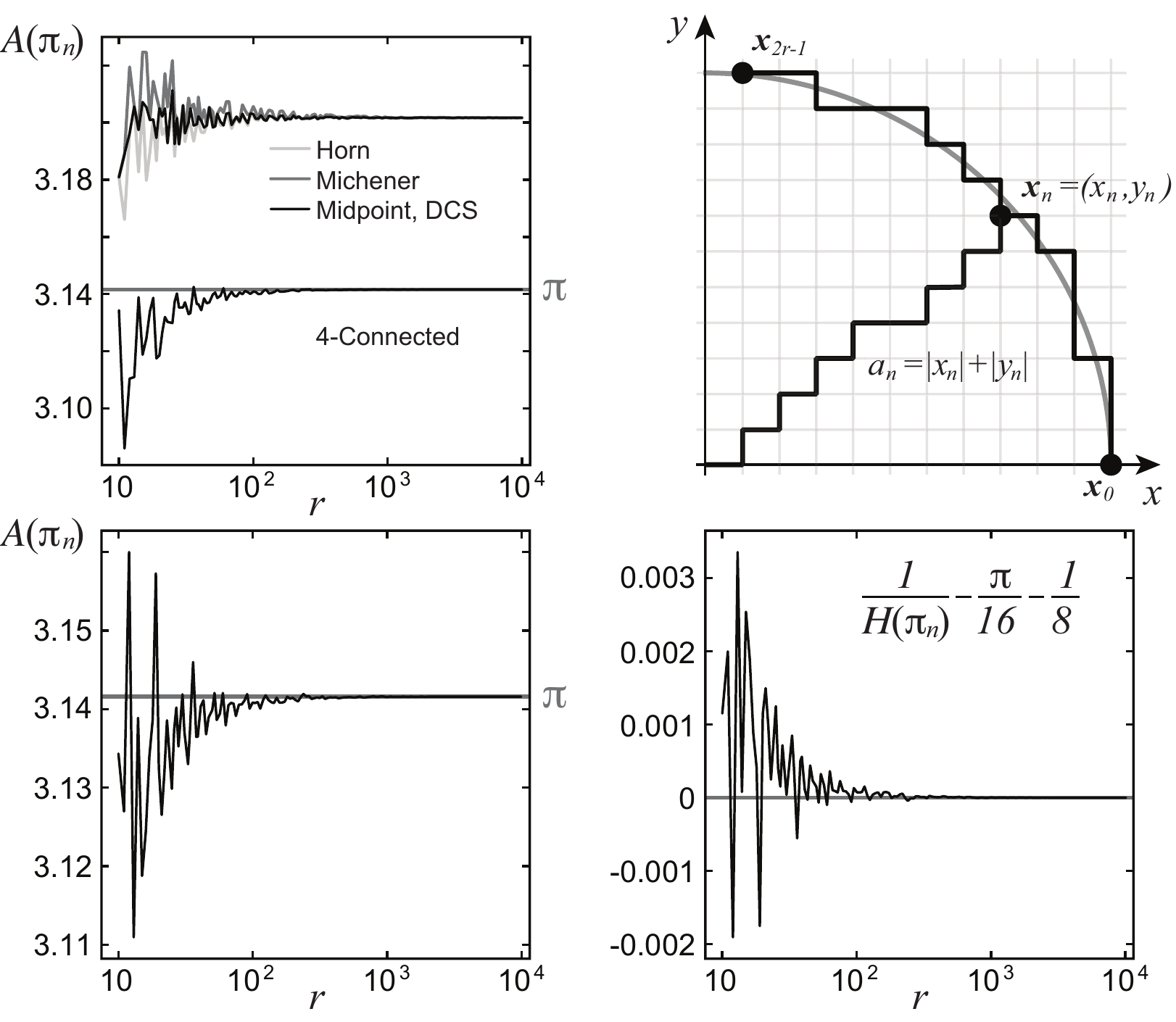}
\caption{\label{Fig_4}
Values of $\overline{\pi}(r)$, defined as the arithmetic mean of all $\pi_n(r)$ associated with each point on a digital circle, as function of $r$ for various digital circle algorithms (top left). Calculation of $\pi_n(r)$ in a digital circle constructed using the signum algorithm (top right; see text for explanation), and resulting $\overline{\pi}(r)$ (bottom left). As for the 4-connected algorithm (see top left), also the signum algorithm yields for large $r$ a numerical value converging to $\pi$ (see Conjecture~\ref{Conj_Api}). Interestingly, considering the harmonic mean of $\pi_n(r)$ yields a value proportional to $\pi$ as well (bottom right; see Proposition~\ref{Prop_Hpi}).
}  
\end{figure}

The above outlined parametric discretization constitutes, in one form or the other, the basis for most published digital circle algorithms. Naturally, values of $\overline{\pi}$, defined as the asymptotic limit of the arithmetic mean of $\pi_n(r)$, see Eq.~(\ref{Eq_ApinND}), will, expectedly, deviate from $\pi$ (see Fig.~\ref{Fig_4}, top left). However, and somewhat surprisingly, this appears to be not true for the 4-connected algorithm and the signum algorithm (Fig.~\ref{Fig_4}, bottom left) introduced here. Focusing on the latter, the numerical assessment of the arithmetic mean of the reciprocal Manhattan distance $a_n = |x_n| + |y_n|$ associated with each recursively generated point $(x_n,y_n) \in \mathbb{Z}^2$ (Fig.~\ref{Fig_4}, top right) according to (\ref{Eq_S1Algorithm}) suggests that, in this case, the correct value for $\pi$ is obtained in the asymptotic limit for $r \rightarrow \infty$ (Fig.~\ref{Fig_4}, bottom left). Specifically, we can formulate the following

\begin{conjecture}
\label{Conj_Api}
The arithmetic mean 
\begin{equation}
A(\pi_n)
= \frac{1}{2r} \sum\limits_{n=0}^{2r-1} \frac{\mathcal{C}}{2a_n(r)} 
= 2 \sum\limits_{n=0}^{2r-1} \frac{1}{a_n(r)},
\end{equation}
of the finite sequence 
\begin{equation}
\pi_n(r) = \frac{\mathcal{C}}{2a_n(r)} = \frac{4r}{a_n(r)},
\end{equation}
where $a_n = |x_n| + |y_n|$ denotes the $\ell^1$-norm of each point $(x_n,y_n) \in \mathbb{Z}^2$ on the digital circle $\mathcal{S}^1 \subset \mathbb{Z}^2$ constructed recursively by (\ref{Eq_S1Algorithm}), converges to $\pi$ in the asymptotic limit $r \rightarrow \infty$, i.e.
\begin{equation}
\lim_{r \rightarrow \infty} A(\pi_n) = \pi.
\end{equation}
\end{conjecture}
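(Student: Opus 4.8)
The plan is to isolate one exact algebraic feature of the signum recursion and one soft geometric estimate, and then to read the arithmetic mean as a Riemann sum. First I would record the identity already extracted in the proof of the preceding proposition: the difference $b_n=x_n-y_n$ obeys $b_{n+1}=b_n-1$ \emph{regardless} of the value of $s_n$, since each of the two admissible moves in (\ref{Eq_xn1a}) lowers $x-y$ by exactly one. Hence
\begin{equation*}
x_n-y_n = r-n \qquad \text{exactly, for all } n\in[0,2r-1].
\end{equation*}
This is the structural property that separates the signum path from the equal-angle discretizations behind (\ref{Eq_ApinND1}): the signum algorithm advances by one unit of $\ell^1$ arc length per step, rather than by equal increments of the polar angle, and it is precisely this re-parametrization that I expect to fix the value of the limit.

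Next I would establish the supporting lemma that the path hugs $S^1$: uniformly in $n\in[0,2r-1]$ one has $\|(x_n,y_n)\|_2 = r + o(r)$, and in fact $\|(x_n,y_n)\|_2 = r + O(1)$. The mechanism is that the minimization criterion (\ref{Eq_xn1a}) selects at each step the neighbour whose Euclidean distance to $S^1$ is the smaller; writing $\rho_n=\|(x_n,y_n)\|_2$, the left move changes $\rho_n^2$ by $-2x_n+1$ and the up move by $+2y_n+1$, so the selection acts as a ``bang--bang'' control pushing $\rho_n$ back toward $r$ and preventing $|\rho_n-r|$ from growing beyond a constant. I expect this to be the main obstacle, because the restoring effect of the criterion is weak near the coordinate axes, where one of $x_n,y_n$ is small; controlling the excursions there uniformly in $r$ is the delicate point. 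Fortunately the downstream argument is forgiving: as the error analysis below shows, even the weak bound $\rho_n=r+o(r)$ suffices, so a crude estimate already closes the argument.

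Granting these two inputs I would combine them through $a_n^2=(x_n+y_n)^2=2(x_n^2+y_n^2)-(x_n-y_n)^2$. Substituting $x_n^2+y_n^2=r^2+O(r)$ and $x_n-y_n=r-n$ gives, uniformly in $n$,
\begin{equation*}
a_n=\sqrt{2r^2-(r-n)^2}+O(1)=r\,\sqrt{\,2-\bigl(1-\tfrac{n}{r}\bigr)^2\,}+O(1),
\end{equation*}
where the radicand lies in $[1,2]$, so $a_n\ge r$ and the remainder is genuinely $O(1)$. Since $a_n\ge r$, the multiplicative error is controlled: $\tfrac{4r}{a_n}-\tfrac{4r}{\sqrt{2r^2-(r-n)^2}}=4r\cdot O(1)/(a_n\sqrt{\cdots})=O(1/r)$ uniformly, whence
\begin{equation*}
A(\pi_n)=\frac{1}{2r}\sum_{n=0}^{2r-1}\frac{4r}{a_n}
=\frac{1}{2r}\sum_{n=0}^{2r-1}\frac{4}{\sqrt{2-(1-n/r)^2}}+O\!\left(\tfrac1r\right).
\end{equation*}

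Finally I would read the leading term as a Riemann sum of the continuous integrand $t\mapsto 4/\sqrt{2-(1-2t)^2}$, bounded on $[0,1]$ and bounded away from zero, over the nodes $t=n/(2r)$, so that
\begin{equation*}
\lim_{r\to\infty}A(\pi_n)=\int_0^1\frac{4\,dt}{\sqrt{2-(1-2t)^2}}
=\int_{-1}^{1}\frac{2\,du}{\sqrt{2-u^2}}
=2\Bigl[\arcsin\tfrac{u}{\sqrt2}\Bigr]_{-1}^{1}=\pi,
\end{equation*}
using the substitution $u=1-2t$. The geometric content is transparent: the weight $dn=r(\sin\varphi+\cos\varphi)\,d\varphi$ accrued by stepping in $\ell^1$ arc length exactly cancels the factor $1/(\cos\varphi+\sin\varphi)$ in $\pi_n=4/(\cos\varphi+\sin\varphi)$, leaving a constant integrand whose average over the quadrant is $\pi$; this is the very cancellation that the equal-angle discretization (\ref{Eq_ApinND1}) fails to produce, explaining why that route lands on $3.17406\ldots$ instead. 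The only routine points that remain are the uniformity of the $O(1/r)$ multiplicative error (immediate from $a_n\ge r$) and the $O(1/r)$ discretization error of the Riemann sum for a $C^1$ integrand bounded away from zero.
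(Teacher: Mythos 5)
You could not have known this, but the paper offers no proof of this statement at all: it is stated as Conjecture~\ref{Conj_Api}, with ``the attempt of a rigorous proof'' deferred to the external reference \cite{Rudolph16}. The only proof in that section of the paper is the one of Proposition~\ref{Prop_Hpi} (the harmonic mean), which proceeds by an entirely different, area-based mechanism (a recursion for the enclosed area $\mathcal{A}(r)$, symmetry identities for the partial sums $S_n$, and a squeeze between inner and outer lattice paths). So your proposal is not an alternative route to an existing argument; it is an attempt at the open claim itself, and its skeleton is sound. The exact relation $x_n-y_n=r-n$ is indeed available in the paper (it is the identity $b_n=r-n$ derived in Section~\ref{SS_CostFunction}); the identity $a_n^2=2(x_n^2+y_n^2)-(x_n-y_n)^2$ is correct; the error propagation is correct \emph{given} a uniform bound on $\rho_n=\lVert(x_n,y_n)\rVert_2$ (and you are right that uniform $o(r)$ suffices, since the radicand $2-(1-n/r)^2$ is bounded below by $1$); and the Riemann-sum limit $\int_0^1 4\,dt/\sqrt{2-(1-2t)^2}=\pi$ is evaluated correctly. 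Your closing heuristic is also exactly right: sampling by equal $\ell^1$ arc length carries the weight $r(\cos\varphi+\sin\varphi)\,d\varphi$, which cancels the integrand $4/(\cos\varphi+\sin\varphi)$, whereas equal-angle sampling in (\ref{Eq_ApinND1}) does not — hence $3.17406\ldots$ there. As a bonus, once your key lemma is in place the very same Riemann-sum argument applied to $A\bigl(\tfrac{a_n}{4r}\bigr)$ gives $\int_0^1\tfrac14\sqrt{2-(1-2t)^2}\,dt=\tfrac{\pi}{16}+\tfrac18$, i.e.\ it rederives Proposition~\ref{Prop_Hpi} in a more unified way than the paper's area computation.

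The genuine gap is the lemma you yourself flag as the main obstacle: you never prove $\sup_{0\le n\le 2r-1}\lvert\rho_n-r\rvert=o(r)$ (let alone $O(1)$); you describe the minimization (\ref{Eq_xn1a}) as a ``bang--bang'' control and assert that ``a crude estimate already closes the argument'' without supplying any estimate. Every subsequent step is conditional on this bound, so as written nothing is proven; and you cannot simply quote the paper's remark that every point of $\mathcal{S}^1$ lies within $\sqrt2$ of $S^1$, since that assertion is itself offered without proof. The gap is fillable along the lines you indicate. Set $E_n=\rho_n^2-r^2$, so the two admissible moves change $E_n$ by $-(2x_n-1)$ and $+(2y_n+1)$, with $0\le x_n,y_n\le r$. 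If $E_n\ge 0$ then necessarily $x_n\ge 1$ (for $n\le 2r-1$, $x_n=0$ forces $y_n=n-r<r$, hence $E_n<0$), and the left option satisfies $\lvert E_n-(2x_n-1)\rvert\le\max(\lvert E_n\rvert,\,2r-1)$; if $E_n<0$, the up option satisfies $\lvert E_n+2y_n+1\rvert\le\max(\lvert E_n\rvert,\,2r+1)$. The algorithm minimizes $\lvert r-\rho_{n+1}\rvert=\lvert E_{n+1}\rvert/(r+\rho_{n+1})$ rather than $\lvert E_{n+1}\rvert$, but both candidate points have $\rho$ within $1$ of $\rho_n$, so the chosen option obeys $\lvert E_{n+1}\rvert\le\bigl(1+\tfrac2r\bigr)\max(\lvert E_n\rvert,\,2r+1)$; by induction $\lvert E_n\rvert\le\bigl(1+\tfrac2r\bigr)^{2r}(2r+1)\le e^4(2r+1)$, whence $\lvert\rho_n-r\rvert\le\lvert E_n\rvert/r=O(1)$ uniformly in $n$ and $r$. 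With this lemma inserted, your argument is complete; without it, the proposal remains a good strategy rather than a proof.
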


The attempt of a rigorous proof of this conjecture can be found in \cite{Rudolph16}. We also note that the same convergence is found in the case of the 4-connected algorithm (\cite{BarreraEA15}; see Fig.~\ref{Fig_4}, left).

Although the recovery of $\pi$ in the case of a valid path describing a digital circle in $\mathbb{Z}^2$, a space with $\ell^1$-norm, is somewhat unexpected, an even more surprising result is obtained when considering the reciprocal of the harmonic mean $H(\pi_n)$, which is proportional to the arithmetic mean of $1/\pi_n \sim a_n(r)$ itself. Specifically, we have

\begin{proposition}
\label{Prop_Hpi}
The harmonic mean 
\begin{equation}
\label{Eq_Hpi}
H(\pi_n) = \left( A\left(\frac{a_n(r)}{4r}\right) \right)^{-1}
\end{equation}
of the sequence of $\pi_n$ values associated with each point $(x_n,y_n) \in \mathbb{Z}^2$ along a digital circle $\mathcal{S}^1 \subset \mathbb{Z}^2$ constructed recursively through (\ref{Eq_S1Algorithm}) obeys, in the asymptotic limit $r \rightarrow \infty$, the identity
\begin{equation}
\label{Eq_Hpi1}
\lim_{r \rightarrow \infty} \frac{1}{H(\pi_n)} = \frac{\pi}{16} + \frac{1}{8}.
\end{equation}
\end{proposition}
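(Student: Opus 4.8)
The plan is to reduce the statement to the large-$r$ asymptotics of a single \emph{linear} sum and then evaluate that sum by recognizing it as a Riemann sum for an explicit integral.

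First I would unfold the definitions. Since $\pi_n = 4r/a_n$, we have $1/\pi_n = a_n/(4r)$, so by (\ref{Eq_Hpi})
\[
\frac{1}{H(\pi_n)} = A\!\left(\frac{a_n}{4r}\right) = \frac{1}{2r}\sum_{n=0}^{2r-1}\frac{a_n}{4r} = \frac{1}{8r^2}\sum_{n=0}^{2r-1} a_n .
\]
The whole proposition therefore amounts to establishing the estimate $\sum_{n=0}^{2r-1} a_n = r^2\bigl(1+\tfrac{\pi}{2}\bigr) + o(r^2)$, since multiplying by $1/(8r^2)$ then produces exactly $\tfrac{1}{8}+\tfrac{\pi}{16}$ in the limit. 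It is worth stressing that, unlike Conjecture~\ref{Conj_Api}, the quantity here is linear in the $a_n$, which is precisely what will make the error analysis manageable.

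Second I would produce a closed-form asymptotic for $a_n$. From the derivation leading to (\ref{Eq_anRec}) and (\ref{Eq_cn}) we have exactly $b_n := x_n - y_n = r-n$, together with $a_n = x_n+y_n$, and hence $a_n^2 + b_n^2 = 2(x_n^2+y_n^2)$. Because every point of $\mathcal{S}^1$ lies within Euclidean distance $\sqrt{2}$ of $S^1$ by construction, $x_n^2+y_n^2 = r^2 + O(r)$ uniformly in $n$, which gives
\[
a_n^2 = 2r^2 - (r-n)^2 + O(r), \qquad a_n = \sqrt{2r^2-(r-n)^2} + O(1),
\]
the $O(1)$ bound being uniform since $2r^2-(r-n)^2 \ge r^2$ on the index range $0 \le n \le 2r-1$, so the square root stays bounded away from zero. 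Third I would pass to an integral: shifting the index by $m = n-r$ (so that $r-n=-m$ and $m$ ranges over $-r,\dots,r-1$) turns the main sum into a unit-spacing Riemann sum,
\[
\sum_{n=0}^{2r-1} a_n = \sum_{m=-r}^{r-1}\sqrt{2r^2-m^2} + O(r) = \int_{-r}^{r}\sqrt{2r^2-m^2}\,dm + O(r),
\]
and the integral evaluates, via the standard antiderivative (or the substitution $m=\sqrt{2}\,r\sin\psi$), to $r^2\bigl(1+\tfrac{\pi}{2}\bigr)$. Dividing by $8r^2$ and letting $r\to\infty$ then yields $\tfrac{\pi}{16}+\tfrac{1}{8}$, as claimed.

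The main obstacle is to certify that the two accumulated error sources are genuinely $o(r^2)$. The first is the per-term replacement of $a_n$ by $\sqrt{2r^2-(r-n)^2}$: this rests on the uniform-in-$n$ geometric fact that the signum path never strays more than a bounded Euclidean distance from $S^1$, giving $x_n^2+y_n^2-r^2 = O(r)$, hence a per-term deviation of $O(1)$ and a total of $O(r)$. The second is the Riemann-sum error for $g(m)=\sqrt{2r^2-m^2}$, controlled by $|g''(m)| \le 2/r$ on $[-r,r]$, which yields a trapezoidal error of $O(1)$. Both are comfortably $o(r^2)$ and thus vanish after division by $8r^2$; the only genuinely delicate point is turning the heuristic ``within $\sqrt{2}$ of $S^1$'' into a rigorous uniform bound valid for every $n$ and every $r$, and this quantitative control of the path is where the real work of the proof lies.
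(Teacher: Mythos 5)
Your argument is correct, but it follows a genuinely different route from the paper's proof, and the comparison is instructive. The paper never approximates $a_n$ pointwise: it unwinds the recursion (\ref{Eq_S1Algorithm}) combinatorially, uses the symmetry identities $s_n=-s_{2r-1-n}$, $S_{2r-1}=0$ for the partial sign sums $S_n=\sum_{k\le n}s_k$, and arrives at an \emph{exact} finite-$r$ identity expressing $A\bigl(a_n/(4r)\bigr)$ in terms of the lattice area $\mathcal{A}(r)$ enclosed by $\mathcal{S}^1$, which is then squeezed between inner and outer staircase paths to give $\mathcal{A}(r)\to\tfrac14\pi r^2$; the payoff is an intermediate result (harmonic mean $\leftrightarrow$ enclosed area) of independent interest, e.g.\ for Gauss's circle problem as noted in Section~\ref{S_Conclusion}. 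You instead linearize through the exact relation $b_n=x_n-y_n=r-n$ (which the paper also derives), write $a_n=\sqrt{2r^2-(r-n)^2}+O(1)$, and evaluate a Riemann sum; this is shorter, purely analytic, and makes transparent where the value comes from, namely $\int_{-r}^{r}\sqrt{2r^2-m^2}\,dm=r^2\bigl(1+\tfrac{\pi}{2}\bigr)$. Note that the ``delicate point'' you flag --- a uniform bound on $\bigl|\lVert\boldsymbol{x}_n\rVert_2-r\bigr|$ --- is not actually avoided by the paper: its squeeze $\mathcal{A}_{\text{inner}}\le\mathcal{A}\le\mathcal{A}_{\text{outer}}$ rests on exactly the same ``by construction'' closeness claim, asserted without proof. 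In your formulation the claim closes with a short induction: set $d_n:=\lVert\boldsymbol{x}_n\rVert_2-r$, so $d_0=0$; in the upper-right quadrant the left move decreases the Euclidean norm (by at most $1$, by the triangle inequality) and the up move increases it by at most $1$, so if $|d_n|\le 1$ then the left candidate satisfies $|d|\le 1$ when $d_n\ge 0$ and the up candidate satisfies $|d|\le 1$ when $d_n<0$; since the cost function (\ref{Eq_Cost}) selects precisely the candidate minimizing $|d|$, it follows that $|d_{n+1}|\le 1$. This gives $x_n^2+y_n^2=r^2+O(r)$ uniformly in $n$, so your error terms are rigorous and your proof is complete --- arguably more self-contained than the paper's, though it forgoes the exact area identity that the paper obtains along the way.
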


\begin{proof}
To show (\ref{Eq_Hpi1}), we first calculate the area $\mathcal{A}(r)$ enclosed by the digital circle $\mathcal{S}^1$ (as above, for notational and symmetry reasons, we will restrict to the quarter circle in the upper right quadrant). To that end, we first construct two associated valid paths $\mathcal{P}_{\text{inner}} \subset \mathbb{Z}^2$ and $\mathcal{P}_{\text{outer}} \subset \mathbb{Z}^2$ by taking the floor and ceiling of each coordinate $(x,y)$ along the circle $S^1 \subset \mathbb{R}^2$. Both paths enclose areas $\mathcal{A}_{\text{inner}}(r)$ and $\mathcal{A}_{\text{outer}}(r)$, respectively (see Fig.~\ref{Fig_5}). By construction, each point along the circular path $\mathcal{S}^1$ will reside inside or on the circumference of $\mathcal{A}_{\text{outer}}(r)$, and outside or on the circumference of $\mathcal{A}_{\text{inner}}(r)$, thus
\begin{equation*}
\mathcal{A}_{\text{inner}}(r) \leq \mathcal{A}(r) \leq \mathcal{A}_{\text{outer}}(r) .
\end{equation*}
Moreover, noting that we consider a quarter circle, and recalling the approximation of the area of a circle $4 \mathcal{A} = \pi r^2$ in $\mathbb{R}^2$ by a Riemannian sum, we have
\begin{equation*}
\mathcal{A}_{\text{inner}}(r) < \frac{1}{4} \pi r^2 < \mathcal{A}_{\text{outer}}(r)
\end{equation*}
with 
\begin{equation*}
\lim_{r \rightarrow \infty} \mathcal{A}_{\text{inner}}(r) 
= \lim_{r \rightarrow \infty} \mathcal{A}_{\text{outer}}(r) 
= \frac{1}{4} \pi r^2,
\end{equation*}
thus
\begin{equation}
\label{Eq_Alimit}
\lim_{r \rightarrow \infty} \mathcal{A}(r) = \frac{1}{4} \pi r^2.
\end{equation}

\begin{figure}[t!]
\centering
\includegraphics[width=\linewidth]{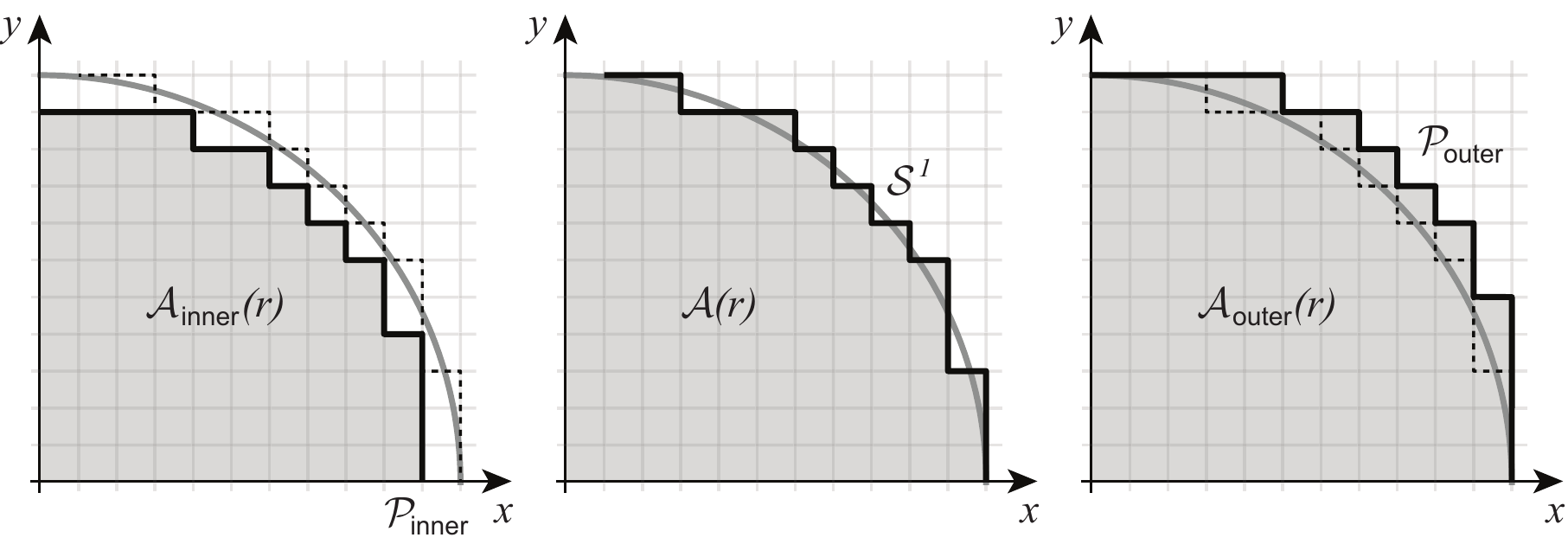}
\caption{\label{Fig_5}
Construction of paths $\mathcal{P}_{\text{inner}} \subset \mathbb{Z}^2$ (left) and $\mathcal{P}_{\text{outer}} \subset \mathbb{Z}^2$ (right) which are enclosed or do enclose the digital circle $\mathcal{S}^1$ (middle), respectively, along with their associated respective areas $\mathcal{A}_{\text{inner}}(r)$, $\mathcal{A}_{\text{outer}}(r)$ and $\mathcal{A}(r)$ in the upper right quadrant (see text for explanation). 
}  
\end{figure}

We next construct recursively the (quarter circle) area enclosed by $\mathcal{S}^1$ through a finite recursive sequence $\mathcal{A}_n(r)$. To that end, we note that $y_{n+1} \neq y_n$ only for $s_n = 1$, whereas $x_{n+1} \neq x_n$ only for $s_n = -1$ (see Proposition~\ref{Prop_SignumAlgorithm}). Starting at $\boldsymbol{x}_0 = (r,0)$, we have 
\begin{equation}
\label{Eq_An}
\mathcal{A}_0 = 0, \mathcal{A}_{n+1} = \mathcal{A}_n + \frac{1}{2} (s_n + 1) x_n
\end{equation}
with $n \in \mathbb{N}, n \in [0, 2r-2]$. If $s_n = 1$, $\mathcal{A}_n$ is updated by the next horizontal ``strip'' according to $\mathcal{A}_{n+1} = \mathcal{A}_n + x_n$, whereas $A_{n+1} = A_n$ in the case of $s_n = -1$. This recursively ``constructs'' the area under $\mathcal{S}^1$ as we go along the circular path $\mathcal{S}^1$. For $n=2r-2$ in (\ref{Eq_An}), we obtain the full area, i.e. 
\begin{equation}
\label{Eq_A}
\mathcal{A}(r) = \mathcal{A}_{2r-1}.
\end{equation}

It remains to evaluate $\mathcal{A}_{2r-1}$. To that end, we first rewrite the recursion (\ref{Eq_An}) in explicit form:
\begin{eqnarray*}
\mathcal{A}_n 
& = & \mathcal{A}_0 + \frac{1}{2} \sum\limits_{k=0}^{n-1} (s_k + 1) x_k \\
& = & \mathcal{A}_0 + \frac{1}{2} \sum\limits_{k=0}^{n-1} s_k x_k + \frac{1}{2} \sum\limits_{k=0}^{n-1} x_k \\
& = & \mathcal{A}_0 + \frac{1}{2} s_0 x_0 + \frac{1}{2} \sum\limits_{k=1}^{n-1} s_k x_k + \frac{1}{2} x_0 + \frac{1}{2} \sum\limits_{k=1}^{n-1} x_k \\
& = & x_0 + \frac{1}{2} \sum\limits_{k=1}^{n-1} s_k x_k + \frac{1}{2} \sum\limits_{k=1}^{n-1} \left( x_0 + \frac{1}{2} S_{k-1} - \frac{k}{2} \right) \\
& = & \frac{1}{2} (n+1) x_0 - \frac{1}{8} n (n-1) + \frac{1}{2} \sum\limits_{k=1}^{n-1} s_k x_k + \frac{1}{4} \sum\limits_{k=1}^{n-1} S_{k-1} ,
\end{eqnarray*}
$n \in [0,2r-1]$, where in the penultimate step we utilized the explicit form of $x_n$,
\begin{equation}
\label{Eq_xnExpl}
x_n = x_0 + \frac{1}{2} S_{n-1} - \frac{n}{2},
\end{equation}
which can easily be deduced from (\ref{Eq_S1Algorithm}) with
\begin{equation}
\label{Eq_Sn}
S_n := \sum\limits_{k=0}^n s_k .
\end{equation}
Applying again (\ref{Eq_xnExpl}), we obtain
\begin{equation*}
\mathcal{A}_n = \frac{1}{8} n ( 1 - n + 4r ) + \frac{1}{4} \sum\limits_{k=1}^{n-1} S_{k-1} + \frac{1}{2} r S_{n-1} + \frac{1}{4} \sum\limits_{k=1}^{n-1} s_k S_{k-1} - \frac{1}{4} \sum\limits_{k=1}^{n-1} k s_k ,
\end{equation*}
where $x_0 = r$ and $s_0 = 1$ were used. This yields, with (\ref{Eq_A}),
\begin{equation}
\label{Eq_An1}
\mathcal{A}(r) 
 =  \frac{1}{4} \left( (r+1)(2r-1) + \sum\limits_{k=1}^{2r-2} S_{k-1} + 2r S_{2r-2} 
 + \sum\limits_{k=1}^{2r-2} s_k S_{k-1} - \sum\limits_{k=1}^{2r-2} k s_k \right).
\end{equation}

We first evaluate $S_{2r-2}$. Due to its definition (\ref{Eq_Sn}), $S_n$ is subject to the recursion
\begin{equation}
\label{Eq_SnRec}
S_0 = s_0 = 1, S_{n+1} = S_n + s_{n+1} 
\end{equation}
with $n \in [0, 2r-2]$, which yields $S_{2r-2} = S_{2r-1} - s_{2r-1}$. Due to symmetry of the lower and upper half of the quarter circle, the number of steps to the left ($s_n = -1$) and upwards ($s_n = 1$) must, by construction, be equal, hence $S_{2r-1} = 0$. Moreover, again due to symmetry, $s_{2r-1} = -1$, which yields 
\begin{equation}
\label{Eq_AnA}
S_{2r-2} = 1. 
\end{equation}

The second last term (\ref{Eq_An}) can be similarly treated, using arguments from symmetry. Specifically, we have
\begin{eqnarray*}
s_n & = & - s_{2r-1-n} \\
S_n & = & S_{2r-1-(n+1)} 
\end{eqnarray*}
$\forall n \in [0,r]$. Thus,
\begin{eqnarray}
\label{Eq_AnB}
\sum\limits_{k=1}^{2r-2} s_k S_{k-1}
& = & \sum\limits_{k=1}^{r-1} s_k S_{k-1} + \sum\limits_{k=r}^{2r-2} s_k S_{k-1} \nonumber \\
& = & \sum\limits_{k=1}^{r-1} s_k S_{k-1} + \sum\limits_{k=1}^{r-1} s_{2r-1-k} S_{2r-1-(k+1)} \nonumber \\
& = & \sum\limits_{k=1}^{r-1} s_k S_{k-1} - \sum\limits_{k=1}^{r-1} s_k S_k \nonumber \\
& = & \sum\limits_{k=1}^{r-1} s_k ( S_{k-1} - S_k ) \nonumber \\
& = & -\sum\limits_{k=1}^{r-1} s_k^2 =  -\sum\limits_{k=1}^{r-1} 1 = -(r-1),
\end{eqnarray}
where in the last step we used again (\ref{Eq_SnRec}) and the fact that $s_n^2 = 1$ for all $n$.

Finally, reordering terms in the last sum in (\ref{Eq_An}) yields
\begin{eqnarray}
\label{Eq_AnC}
\sum\limits_{k=1}^{2r-2} k s_k
& = & \sum\limits_{k=1}^{2r-2} s_k + \sum\limits_{k=2}^{2r-2} s_k + \ldots + \sum\limits_{k=2r-2}^{2r-2} s_k \nonumber \\
& = & (S_{2r-2} - S_0) + (S_{2r-2} - S_1) + \ldots + (S_{2r-2} - S_{2r-3}) \nonumber \\
& = & (2r-2) S_{2r-2} - \sum\limits_{k=0}^{2r-3} S_k \nonumber \\
& = & 2r - 2 - \sum\limits_{k=1}^{2r-2} S_{k-1},
\end{eqnarray}
where in the last step we used (\ref{Eq_AnA}) and changed the summation index in the remaining sum. Taking (\ref{Eq_AnA}), (\ref{Eq_AnB}) and (\ref{Eq_AnC}), we obtain for (\ref{Eq_An})
\begin{equation*}
\mathcal{A}(r) = \frac{1}{2} \left( r^2 + 1 \right) + \frac{1}{2} \sum\limits_{k=1}^{2r-2} S_{k-1},
\end{equation*}
which yields 
\begin{equation}
\label{Eq_sumSn}
\sum\limits_{k=1}^{2r-2} S_{k-1} = 2 \mathcal{A}(r) - r^2 - 1.
\end{equation}

We can now calculate the arithmetic mean of $a_n$, specifically
\begin{eqnarray*}
A\left(\frac{a_n}{4r}\right) 
& = & \frac{1}{2r} \sum\limits_{k=0}^{2r-1} \frac{a_n}{4r} \\
& = & \frac{1}{8r^2} \sum\limits_{k=0}^{2r-1} (r + S_{k-1}).
\end{eqnarray*}
Here we made use of the explicit form of $a_n$, which can easily be deduced from (\ref{Eq_anRec}) as $a_n = a_0 + S_{n-1}$ with $a_0 = r$. Together with (\ref{Eq_sumSn}), we then have
\begin{eqnarray*}
A\left(\frac{a_n}{4r}\right) 
& = & \frac{1}{8r^2} \left( \sum\limits_{k=0}^{2r-1} r + \sum\limits_{k=0}^{2r-1} S_{k-1} \right) \\
& = & \frac{1}{4r^2} \mathcal{A}(r) + \frac{1}{8} - \frac{1}{8r^2},
\end{eqnarray*}
which yields in the asymptotic limit for $r \rightarrow \infty$, using (\ref{Eq_Alimit}), 
\begin{equation*}
\lim_{r \rightarrow \infty} A\left(\frac{a_n}{4r}\right) = \frac{\pi}{16} + \frac{1}{8}.
\end{equation*}
Finally, noting that $\pi_n = \frac{a_n}{4r}$, and that the harmonic mean is the reciprocal dual of the arithmetic mean, we have proven Proposition~\ref{Prop_Hpi}.
\end{proof}


\section{Concluding Remarks}
\label{S_Conclusion}

The results presented in the last section hint at some deeper number-theoretical peculiarities of digital circles, beyond their defining conception as mere digital, or digitized, models of circles in $\mathbb{R}^2$. When considering digital circles rigorously in a discrete space with $\ell^1$-norm, a direct link can be drawn to their continuous ideal $S^1$. We exemplified this point by showing that $\pi$ can be recovered in the asymptotic limit of infinite radius by simply averaging over the $\pi$-values associated with each point along a valid discrete circular path in a space with $\ell^1$-norm (Conjecture~\ref{Conj_Api}). Equally interesting is the finding that also the harmonic mean of this sequence of $\pi$-values yields, in the asymptotic limit, a value  linear in $\pi$ (Proposition~\ref{Prop_Hpi}). Although the fundamental inequality linking the arithmetic and harmonic means of a given sequence is not violated,
\begin{equation}
\lim_{r \rightarrow \infty} A(\pi_n) = \pi > \frac{16}{\pi + 2} = \lim_{r \rightarrow \infty} H(\pi_n),
\end{equation}
the construction of the sequences of $\pi_n$ and their reciprocals suggest an identity linking $\pi$ and its reciprocal.

Finally, the recursive signum algorithm for constructing a valid digital path in $\mathbb{Z}^2$ (Proposition~\ref{Prop_SignumAlgorithm}) approximating $S^1 \subset \mathbb{R}^2$ allows for the construction of a recursive sequence yielding the area inside a circular path, as demonstrated in the proof of Proposition~\ref{Prop_Hpi}. To what extent this approach might be exploitable for gaining deeper insights into the Gauss's Circle Problem remains to be explored.


\section*{Acknowledgments}
Research supported in part by CNRS. The author wishes to thank LE Muller II, J Antolik, D Holstein, JAG Willow, S Hower and OD Little for valuable discussions and comments.



\end{document}